\newcommand{\ab}[1] {\left\vert #1\right\vert}
\DeclarePairedDelimiter{\ceil}{\lceil}{\rceil}
\newcommand{\no}[1] {\left\vert #1\right\vert_1}
\newcommand{\nz}[1] {\left\vert #1\right\vert_0}
\newcommand{\nats}{\ensuremath{\mathbb{N}}\xspace}
\newcommand{\reals}{\ensuremath{\mathbb{R}}\xspace}
\newcommand{\OPT}{\ensuremath{\textsc{Opt}}\xspace}
\newcommand{\ALG}{\ensuremath{\textsc{Alg}}\xspace}
\newcommand{\ALGa}{\ensuremath{\textsc{Alg}_1}\xspace}
\newcommand{\ALGb}{\ensuremath{\textsc{Alg}_2}\xspace}
\newcommand{\inputs}{\ensuremath{\mathcal{I}}\xspace}
\newcommand{\seq}{\ensuremath{\sigma}\xspace}
\newcommand{\length}[1]{\ensuremath{|#1|}\xspace}
\def\opt{\ensuremath{\textsc{Opt}}\xspace}
\def\OPT{\ensuremath{\textsc{Opt}}\xspace}
\def\ALG{\ensuremath{\textsc{Alg}}\xspace}
\def\D{\ensuremath{\textsc{Det}}\xspace}
\def\minasgk{{\sc minASGk}\xspace}
\def\minasgkw{{\sc minASGk$_{\text{w}}$}\xspace}
\def\aoc{\ensuremath{\mathsf{AOC}}\xspace}
\def\aocc{\ensuremath{\mathsf{AOC}\text{-complete}}\xspace}
\def\alg{\ensuremath{\textsc{Alg}}\xspace}
\def\vl{\ensuremath{\mathbf{L}}\xspace}
\def\vv{\ensuremath{\mathbf{v}}\xspace}
\def\vu{\ensuremath{\mathbf{u}}\xspace}
\def\sq{\sqsubseteq}
\def\minSk{{\sc min\-ASGk}\xspace}
\newcommand{\Pw}{\ensuremath{\textsc{P}_{\text{w}}}\xspace}
\def\P{{\textsc P}\xspace}
\DeclareMathOperator{\E}{\mathbb{E}}
\DeclareMathOperator{\polylog}{polylog}
\newcommand{\eps}{\ensuremath{\varepsilon}\xspace}
\newcommand{\wmin}{\ensuremath{w_{\text{min}}}\xspace}
\newcommand{\wmax}{\ensuremath{w_{\text{max}}}\xspace}
\newcommand{\kmax}{\ensuremath{k_{\text{max}}}\xspace}
\newcommand{\vopt}{\ensuremath{V_{\opt}}\xspace}
\newcommand{\imp}{\ensuremath{\text{imp}}\xspace}
\newcommand{\unimp}{\ensuremath{\text{unimp}}\xspace}
\newtheorem{observation}{Observation}
\title{Weighted Online Problems with Advice
\thanks{This work was partially supported by the Villum Foundation, grant VKR023219, and the Danish Council for Independent Research, Natural Sciences, grant DFF-1323-00247.}}
\author{Joan Boyar \and Lene M. Favrholdt \and Christian Kudahl \and Jesper W. Mikkelsen}
\institute{Department of Mathematics and Computer Science,
University of Southern Denmark}
\begin{document}
\maketitle

\begin{abstract}
Recently, the first online complexity class, \aoc, was introduced.
The class consists of many online problems where each request must be either accepted or rejected, and the aim is to either minimize or maximize the number of accepted requests, while maintaining a feasible solution.
All \aoc-complete problems (including Independent Set, Vertex Cover, Dominating Set, and Set Cover) have essentially the same advice complexity.
In this paper, we study weighted versions of problems in \aoc, i.e., each request comes with a weight and the aim is to either minimize or maximize the total weight of the accepted requests.
In contrast to the unweighted versions, we show that there is a significant difference in the advice complexity of complete minimization and maximization problems.
We also show that our algorithmic techniques for dealing with weighted requests can be extended to work for non-complete \aoc 
problems such as Matching in the edge arrival model (giving better results than what follow from the
general \aoc results) and even non-\aoc problems such as scheduling.
\end{abstract}

\section{Introduction}
An online problem is an optimization problem for which the input is divided into small pieces, usually called requests, arriving sequentially.
An online algorithm must serve each request, irrevocably, without any knowledge of possible future requests.
The quality of online algorithms is traditionally measured using the competitive ratio~\cite{CompRatio1, CompRatio2}, which is essentially the worst case ratio of the online performance to the performance of an optimal offline algorithm, i.e., an algorithm that knows the whole input sequence from the beginning and has unlimited computational power.

For some online problems such as Independent Set or Vertex Cover, the best possible competitive ratio is linear in the sequence length.
This gives rise to the question of what would happen, if the algorithm knew {\em something} about future requests.
Semi-online settings, where it is assumed that the algorithm has some specific knowledge such as the value of an optimal solution,  have been studied (see
\cite{BFKLM17} for many relevant references).
The extra knowledge may also be more problem specific such as an access graph for paging~\cite{BIRS95,CN99}.

In contrast to problem specific approaches, advice complexity~\cite{A1,A4,A2} is a quantitative and standardized way of relaxing the online constraint.
The main idea of advice complexity is to provide an online algorithm, \ALG, with some partial knowledge of the future in the form of advice bits provided by a trusted oracle which has unlimited computational power and knows the entire request sequence. 
Informally, the advice complexity of an algorithm
is a function of input sequence length, and for a given $n$, it is the maximum
number of advice bits read for input sequences of length $n$.
The advice complexity of a problem is a function of input sequence length and competitive ratio, and for a given competitive ratio $c$, it is the best possible advice complexity of
any $c$-competitive algorithm for the problem.
Advice complexity is formally defined in Section~\ref{sec:prelim}.

Upper bounds on the advice complexity for a problem can
sometimes lead to (or come from) semi-online algorithms, and lower bounds can show
that such algorithms do not exist. 
Since its introduction, advice complexity has been a very active area of research. Lower and upper bounds on the advice complexity have been obtained for a large number of online problems; a recent list can be found in~\cite{Sml2015}.
For a survey on advice complexity, see~\cite{BFKLM17}.

Recently in~\cite{BFKM15}, the first  complexity class for online problems, 
\aoc, was introduced.
The class consists of online problems that can be described in the following way:
The input is a sequence of requests and each request must either be accepted or rejected.
The set of accepted requests is called the solution.
For each request sequence, there is at least one feasible solution.
The class contains minimization as well as maximization problems.
For a minimization problem, the goal is to accept as few requests as possible, while maintaining a feasible solution, and for maximization problems, the aim is to accept as many requests as possible.
For minimization problems, any super set of a feasible solution is also a solution, and for maximization problems, any subset of a feasible solution is also a feasible solution.
The AOC-complete problems are the hardest problems in the class in terms of
their advice complexity.
The class \aoc is formally defined in Section~\ref{sec:weightedAOC}.

In this paper, we consider a generalization of the problems in the class \aoc in which each request comes with a weight.
The goal is now to either minimize or maximize the total weight of the accepted requests.
We separately consider the classes of maximization and minimization problems.
For \aoc-complete maximization problems, we get advice complexity
results quite similar to those for the unweighted versions of the
problems. On the other hand, for \aoc-complete minimization problems,
the results are a lot more negative: using less than one advice bit per request leads to unbounded competitive ratios, so this gives a complexity class containing harder problems than \aoc.
This is in contrast to unweighted AOC-complete problems, where minimization
and maximization problems are equally hard in terms of advice complexity.
 Recently, differences between (unweighted) AOC minimization and maximization problems were
found with respect to online bounded analysis~\cite{BEFLL16} and
min- and max-induced subgraph problems~\cite{KK15}.

Our upper bound techniques are also useful for non-complete \aoc problems such as Matching in the edge arrival model, as well as non-\aoc problems such as Scheduling.

\paragraph{Previous results.}
For any \aoc-complete problem, $\Theta(n/c)$ advice bits are necessary and sufficient to obtain a competitive ratio of $c$. 
More specifically, for competitive ratio $c$, the advice complexity is $B(n,c) \pm O(\log n)$, where 
\begin{equation}
B(n,c)= \log\left(1+\frac{(c-1)^{c-1}}{c^{c}}\right)n,
\label{eq:bnc}
\end{equation}
and
$an/c \leq B(n,c) \leq n/c$, $a = 1/(e \ln(2)) \approx 0.53$. This is
an upper bound on the advice complexity of all problems in \aoc.
In~\cite{BFKM15}, a list of problems including Independent Set, Vertex Cover,
Dominating Set, and Set Cover were proven \aocc.

The paper~\cite{Amakespan} studies a semi-online version of scheduling where it is allowed to keep several parallel schedules and choose the best schedule in the end.
The scheduling problem considered is makespan minimization on $m$ identical machines.
Using $(1/\eps)^{O(\log(1/\eps))}$ parallel schedules, a $(4/3+\eps)$-competitive algorithm is obtained.
Moreover, a $(1+\eps)$-competitive algorithm which uses $(m/\eps)^{O(\log(1/\eps)/\eps)}$ parallel schedules is given along with an almost matching lower bound.
Note that keeping $s$ different schedules until the end corresponds to working with $s$ different online algorithms.
Thus, this particular semi-online model easily translates to the advice model, the advice being which of the $s$ algorithms to run.
In this way, the results of~\cite{Amakespan} correspond to a $(4/3+\eps)$-competitive algorithm using $O(\log^2(1/\eps))$ advice bits and a $(1+\eps)$-competitive algorithm using $O(\log(m/\eps) \cdot \log(1/\eps) / \eps)$ advice bits.
In particular, note that this algorithm uses constant advice in the size of the input and only logarithmic advice in the number of machines. 

In~\cite{Aschedule}, scheduling on identical machines with a more general type of objective function (including makespan, minimizing the $\ell_p$-norm, and machine covering) was studied. 
The paper considers the advice-with-request model where a fixed number of advice bits are provided along with each request. 
The main result is a $(1+\eps)$-competitive algorithm that uses $O((1/\eps) \cdot \log(1/\eps))$ advice bits per request, totaling $O((n/\eps) \cdot \log (1/\eps))$ bits of advice for the entire sequence.

\paragraph{Our results.}
We prove that adding arbitrary weights, \aoc-complete minimization problems become a lot harder than \aoc-complete maximization problems:
\begin{itemize}
\item For \aocc maximization problems, the weighted version is not significantly harder than the unweighted version:
For any maximization problem in \aoc (this includes, e.g., Independent Set), the $c$-competitive algorithm given in~\cite{BFKM15} for the unweighted version of the problem can be converted into a $(1+\eps)c$-competitive algorithm for the weighted version using only $O((\log^2 n)/\eps)$ additional advice bits.
Thus, a $(1+\eps)c$-competitive algorithm using at most $B(n,c)+O((\log^2 n)/\eps)$ bits of advice is obtained.
For the weighted version of non-complete \aoc maximization problems, a better advice complexity than $B(n,c)$ may be obtained:
For any $c$-competitive algorithm for an \aoc maximization problem, \P, using $b$ advice bits can be converted into a $O(c \cdot \log n)$-competitive algorithm for the weighted version of \P using $b+O(\log n)$ advice bits.
For Weighted Matching in the edge arrival model, this implies a $O(\log n)$-competitive algorithm reading $O(\log n)$ bits of advice.
We show that this is best possible in the following sense:
For a set of weighted \aoc problems including Matching, Independent Set and Clique, no algorithm reading $o(\log n)$ bits of advice can have a competitive ratio bounded by any function of $n$.
Furthermore, any $O(1)$-competitive algorithm for Matching must read $\Omega(n)$ advice bits.
\item For all minimization problems known to be \aocc (this includes, e.g., Vertex Cover, Dominating Set, and Set Cover), $n - O(\log n)$ bits of advice are required to obtain a competitive ratio bounded by a function of $n$.
This should be contrasted with the fact that $n$ bits of advice trivially yields a strictly $1$-competitive algorithm.

If the largest weight \wmax cannot be arbitrarily larger than the smallest weight $\wmin$, the $c$-competitive algorithm given in~\cite{BFKM15} for the unweighted version can be converted into a $c(1+\eps)$-competitive algorithm for the weighted versions using $B(n,c) + O(\log^2 n + \log (\log(\wmax/\wmin)/\eps))$ advice bits in total.
\end{itemize}

Our main upper bound technique is a simple exponential classification scheme that can be used to sparsify the set of possible weights. 
This technique can also be used for problems outside of \aoc.
For example, for {\em scheduling on related machines}, we show that for many important objective functions (including makespan minimization and minimizing the $\ell_p$-norm), there exist $(1+\varepsilon)$-competitive algorithms reading $O((\log^2 n)/\eps)$ bits of advice. 
For scheduling on $m$ {\em unrelated} machines where $m$ is constant, we get a similar result, but with $O((\log n)^{m+1}/\eps^m)$ advice bits.
Finally, for unrelated machines, where the goal is to {\em maximize} an objective function, we show that under some mild assumptions on the objective function (satisfied, for example, for machine covering), there is a $(1+\eps)$-competitive algorithm reading $O((\log n)^{m+1}/\eps^m)$ bits of advice.

For scheduling on related and unrelated machines, our results are the first non-trivial upper bounds on the advice complexity. For the case of makespan minimization on identical machines, the algorithm of~\cite{Amakespan} is strictly better than ours. However, for minimizing the $\ell_p$-norm or maximizing the minimum load on identical machines, we exponentially improve the previous best upper bound~\cite{Aschedule} (which was linear in $n$).

\section{Preliminaries}
\label{sec:prelim}
Throughout the paper, we let $n$ denote the number of requests in the input.
We let $\mathbb{R}_+$ denote the set containing 0 and all positive real numbers.
We let $\log$ denote the binary logarithm $\log_2$.
For $k\geq 1$, $[k]=\{1,2,\ldots, k\}$.
For any bit string $y$, 
let $\nz{y}$ and $\no{y}$ denote the number of zeros and the number of
ones, respectively, in $y$. 
We write $x \sq y$ if for all indices, $i$, $x_i=1 \Rightarrow y_i =1$.

\subsection{Advice complexity and competitive analysis}
\label{subsec:advice}
In this paper, we use the ``advice-on-tape'' model~\cite{A1}.
Before the first request arrives, the oracle, which knows the entire request
sequence, prepares an \emph{advice tape}, an infinite binary string. The algorithm \ALG may, at any point, read some bits from the advice tape. The {\em advice complexity} of \ALG is the maximum number of bits read by \ALG for any input sequence of at most a given length. \OPT is an optimal offline algorithm.

Advice complexity is combined with competitive analysis to determine how many bits of advice are necessary and sufficient to achieve a given competitive ratio.

\begin{definition}[Competitive analysis~\cite{CompRatio1, CompRatio2} and advice complexity~\cite{A1}]
The input to an online problem, \P, is a request sequence $\sigma=\langle r_1,\ldots , r_n \rangle$. An \emph{online algorithm with advice}, \ALG, computes the output $y=\langle y_1,\ldots , y_n\rangle$, where $y_i$ is computed from $\varphi, r_1,\ldots , r_i$, where $\varphi$ is the content of the advice tape. Each possible output for \P is associated with a \emph{cost/profit}. For a request sequence $\sigma$, $\ALG(\sigma)$ $(\OPT(\sigma))$ denotes the 
cost/profit of the output computed by $\ALG$ $(\OPT)$ when serving $\sigma$. 

If \P is a minimization (maximization) problem, then $\ALG$ is \emph{$c(n)$-competitive} if there exists a constant, $\alpha$, such that, for all $n \in \mathbb{N}$, $\ALG(\sigma)\leq c(n)\cdot\OPT (\sigma)+\alpha$,
($\OPT (\sigma)\leq c(n)\cdot\ALG (\sigma)+\alpha$),
 for all request sequences, $\sigma$, of length at most $n$. 
If the relevant inequality holds with $\alpha=0$, we say that $\ALG$ is \emph{strictly $c(n)$-competitive}.

The \emph{advice complexity, $b(n)$, of an algorithm}, \ALG, is the largest number of bits of $\varphi$ read by \ALG over all possible request sequences of length at most $n$. 
The {\em advice complexity of a problem}, \P, is a function, $f(n,c)$, $c\geq 1$, such that the smallest possible advice complexity of a strictly $c$-competitive online algorithm for \P is $f(n,c)$.
\end{definition}

We only consider deterministic online algorithms (with advice). Note that both 
$b(n)$ and $c(n)$ in the above definition may depend on $n$, but, for ease of notation, we often write $b$ and $c$ instead of $b(n)$ and $c(n)$. Also, with this definition, $c\geq 1$, for both minimization and maximization problems. 

\subsection{Complexity classes}
In this paper, we consider the complexity class \aoc from~\cite{BFKM15}.

\begin{definition}[\aoc~\cite{BFKM15}]\label{sgeasydef}
A problem, \P, is in \aoc \emph{(Asymmetric Online Covering)} if it can be defined
 as follows:
The input to an instance of \P consists of a sequence of $n$ requests, $\sigma= 
\langle r_1, \ldots, r_n \rangle$, and possibly one final dummy request. An algorithm for \P computes a binary output string, $y=y_1 \ldots y_n\in\{0,1\}^n$, where $y_i=f(r_1, \ldots , r_i)$ for some function $f$. 

For minimization (maximization) problems, the score function, $s$, maps a pair, 
$(\sigma,y)$, of input and output to a cost (profit) in $\mathbb{N} \cup \{\infty \}$ $(\mathbb{N} \cup \{-\infty \})$.
 For an input, $\sigma$, and an output, $y$, $y$ is \emph{feasible} if $s(\sigma
,y) \in \mathbb{N}$. Otherwise, $y$ is \emph{infeasible}. There must exist at least one feasible output. Let $S_{\min}(\sigma)$ $(S_{\max}(\sigma))$ be the set 
of those outputs that minimize (maximize) $s$ for a given input $\sigma$.

If \P is a minimization problem, then for every input, $\sigma$, the following must hold:
\begin{enumerate}
\item For a feasible output, $y$, $s(\sigma,y)=\no{y}$.
\item An output, $y$, is feasible if 
   there exists a $y'\in S_{\min}(\sigma)$ such that $y'\sq y$.\\
  If there is no such $y'$, the output may or may not be feasible.
\end{enumerate}

If \P is a maximization problem, then for every input, $\sigma$, the following must hold:
\begin{enumerate}
\item For a feasible output, $y$, $s(\sigma,y)=\nz{y}$.
\item An output, $y$, is feasible if 
   there exists a $y'\in S_{\max}(\sigma)$ such that $y'\sq y$.\\
  If there is no such $y'$, the output may or may not be feasible.
\end{enumerate}
\end{definition}

Recall that no problem in \aoc requires more than $B(n,c) + O(\log n)$ bits 
of advice (see Eq.~(\ref{eq:bnc}) for the definition of $B(n,c)$). 
This result is based on a covering design technique, where the advice
indicates a superset of the output bits that are 1 in an optimal
solution.

The problems in \aoc requiring the most advice are \aocc~\cite{BFKM15}:
\begin{definition}[\aocc~\cite{BFKM15}]
\label{completedef}
A problem $\P \in \aoc$ is \emph{\aocc} if for all $c>1$, any $c$-competitive algorithm for \P must read at least $B(n,c)-O(\log n)$
  bits of advice.
\end{definition}

In~\cite{BFKM15}, an abstract guessing game, \minSk (Minimum Asymmetric String Guessing with Known History), was introduced and shown to be \aocc. The \minSk-problem itself is very artificial, but it is well-suited as the starting point of reductions. All minimization problems known to be \aocc have been shown to be so via reductions from \minSk. 

The input for \minSk is a secret string $x=x_1x_2\ldots x_n\in\{0,1\}^n$ given in $n$ rounds. In round $i\in[n]$, the online algorithm must answer $y_i\in\{0,1\}$. Immediately after answering, the correct answer $x_i$ for round $i$ is revealed to the algorithm. If the algorithm answers $y_i=1$, it incurs a cost of $1$. If the algorithm answers $y_i=0$, then it incurs no cost if $x_i=0$, but if $x_i=1$, then the output of the algorithm is declared to be infeasible (and the algorithm incurs a cost of $\infty$). The objective is to minimize the total cost incurred. Note that the optimal solution has cost $\no{x}$. See the appendix
for a formal definition of \minSk and for definitions of other \aocc problems.

The problem \minSk is based on the {\em
  binary string guessing} problem~\cite{SG,A2}.
Binary string guessing is similar to asymmetric string guessing,
except that any wrong guess (0 instead of 1 or 1 instead of 0) gives a
cost of 1.

In Theorem \ref{minasglower}, we show a very strong lower bound for a weighted version of \minSk. In Theorem \ref{thm:reduction}, via reductions, we show that this lower bound implies similar strong lower bounds for the weighted version of other \aocc minimization problems.

\begin{definition}[Weighted \aoc]
Let $\P$ be a problem in \aoc. We define the {\em weighted version of $\P$}, denoted $\Pw$, as follows: 
A $\Pw$-input $\sigma=\langle \{r_1, w_1\},$ $\{r_2, w_2\},$ $\ldots , \{r_n, w_n\}\rangle$ consists of $n$ $\P$-requests,
 $r_1,...,r_n$, each of which has a weight $w_i\in\mathbb{R}_+$. The $\P$-request $r_i$ and its weight $w_i$ are revealed simultaneously.
 An output $y=y_1\ldots y_n\in\{0,1\}^n$ is feasible for the input $\sigma$
 if and only if $y$ is feasible for the $\P$-input $\langle r_1,\ldots , r_n\rangle$.
 The cost (profit) of an infeasible solution is $\infty$ ($-\infty$).

If $\P$ is a minimization problem, then the cost of a feasible $\Pw$-output $y$ for an input $\sigma$ is
$$
s(\sigma,y)=\sum_{i=1}^n w_iy_i
$$

If $\P$ is a maximization problem, then the profit of a feasible $\Pw$-output $y$ for an input $\sigma$ is
$$
s(\sigma,y)=\sum_{i=1}^n w_i (1-y_i)
$$
\end{definition}

\section{Weighted Versions of \aoc-Complete Minimization Problems}
\label{sec:weightedAOC}

In the weighted version of \minSk, \minasgkw, each request is a weight for the
current request
and the value $0$ or $1$ of the previous request. Producing a feasible solution
requires {\em accepting} (answering $1$ to) all requests with value $1$,
and the cost of a feasible solution is the sum of all weights for requests 
which are accepted.

We start with a negative result for \minasgkw and then use it to
obtain similar results for the weighted online version of Vertex
Cover, Set Cover, Dominating Set, and Cycle Finding.

\begin{theorem}
\label{minasglower}
For \minasgkw, no algorithm using less than $n$ bits of advice is
$f(n)$-competitive, for any function $f$.
\end{theorem}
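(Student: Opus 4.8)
The plan is to show that if an algorithm reads at most $n-1$ advice bits, then there are two request sequences that the algorithm cannot distinguish but that require very different behaviour, so that on at least one of them the algorithm is forced to be arbitrarily bad. The key point is that with $n-1$ advice bits the oracle has only $2^{n-1}$ advice strings available, but there are $2^n$ secret strings $x\in\{0,1\}^n$, so by pigeonhole some advice string $\varphi$ is used for two distinct secret strings $x\neq x'$. Let $i$ be the first index where they differ; without loss of generality $x_i=1$ and $x_i'=0$. Up through round $i$ the two instances look identical to the algorithm (same advice $\varphi$, same revealed history $x_1\ldots x_{i-1}=x_1'\ldots x_{i-1}'$, and in round $i$ the request only reveals the value of the \emph{previous} request together with a weight we control), so the algorithm produces the same answer $y_i$ in round $i$ on both. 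If $y_i=0$, then on the instance with $x_i=1$ the solution is infeasible, giving cost $\infty$, while $\OPT$ is finite — so no $f(n)$-competitive bound holds. If $y_i=1$, then on the instance with $x_i'=0$ the algorithm pays the weight $w_i$ of request $i$ unnecessarily.

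To make the second case fatal, I would have the oracle/adversary choose the weight $w_i$ of request $i$ to be enormous — say larger than any prescribed threshold — while all \emph{other} requests in the "bad" instance that must genuinely be accepted have tiny (or zero) weight. Concretely, in the instance where $x_i'=0$, take $x_j'=0$ for all $j\neq i$ as well, except possibly put a single necessary $1$ somewhere with negligible weight so that $\OPT>0$; then $\OPT$ on that instance is at most that negligible amount, but $\ALG$ pays at least $w_i$, which can be taken as large as we like relative to $f(n)\cdot\OPT+\alpha$. (Alternatively, if one allows weight $0$, one can simply make $\OPT=0$ on an all-zero string and note that $\ALG$ incurs positive cost $w_i>0$, which already breaks every multiplicative bound once $\alpha$ is fixed, since we can rerun with $w_i>\alpha$.) Either way, for every candidate competitive function $f$ and additive constant $\alpha$, the adversary picks the weights after seeing them, defeating the algorithm.

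The one subtlety — and the step I expect to require the most care — is the interaction between the additive constant $\alpha$ in the definition of competitiveness and the fact that the weights are real numbers the adversary gets to choose. Because the adversary may scale $w_i$ to be as large as desired, a fixed $\alpha$ cannot absorb the penalty: given $f$ and $\alpha$, choose the instance so that $w_i$ exceeds, e.g., $f(n)\cdot\OPT+\alpha+1$. I must also be careful that the pigeonhole argument is applied correctly: the algorithm's advice complexity is the \emph{maximum} number of bits read over inputs of length $\le n$, so "using less than $n$ bits" means at most $n-1$ bits are ever read for length-$n$ inputs, bounding the number of distinct advice strings (equivalently, distinct algorithm behaviours) by $2^{n-1}<2^n$. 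Finally, I should note the standard reading-prefix issue is not a problem here: since the algorithm may read a variable number of bits, one identifies each secret string with the finite advice prefix actually consumed, and two strings consuming prefixes that are one an extension of the other still force identical behaviour on the shared rounds; a clean way to sidestep this entirely is to restrict attention to strings for which the advice read has length exactly $n-1$, or to invoke the usual argument that the set of length-$\le n-1$ advice strings still has size $<2^n$.
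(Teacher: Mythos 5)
Your first half --- the pigeonhole over $2^{n-1}$ advice strings versus the $2^n$ secret strings, the indistinguishability of the two colliding instances up to the first differing round $i$, and the observation that answering $0$ when $x_i=1$ yields an infeasible solution of cost $\infty$ --- matches the paper's argument. The gap is in the other error case, where the algorithm answers $1$ but the true bit is $0$. You propose that the adversary make $w_i$ ``enormous'' and all genuinely required weights ``negligible,'' but you cannot arrange this: the colliding pair $(x,x')$ and the differing position $i$ are whatever the pigeonhole happens to produce over a \emph{fixed} family of weighted inputs, not something the adversary selects afterwards, so you cannot decree that $x'$ is essentially the all-zeros string. Within that fixed family, the weight of request $i$ may only depend on information already available to the algorithm (the revealed prefix $x_1\cdots x_{i-1}$); a rule that makes a weight large or small according to the current or later secret bits both leaks those bits to the algorithm and changes the oracle's advice, destroying the collision. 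And with weights that are fixed constants or monotone in $i$, a single wrong acceptance is not fatal: if some position $j'$ with $x'_{j'}=1$ carries weight comparable to $w_i$, then $\OPT(x')\geq w_{j'}$ and the ratio stays bounded.

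The paper's proof supplies exactly the missing construction: $w_1=a^{1/2}$ and $w_j=w_{j-1}\cdot a^{\pm 2^{-j}}$ according to whether $x_{j-1}$ is $1$ or $0$. Each weight is then a function of the already-revealed history (so nothing leaks and the pigeonhole survives), yet the scheme guarantees that the weight at any $0$-position exceeds every weight at a $1$-position --- whether earlier or later in the sequence --- by a factor of at least $a^{2^{-n}}$. Hence one wrong acceptance costs at least $(a^{2^{-n}}/n)\cdot\OPT(x)$, which is unbounded because $a$ may grow with $n$. Some such history-adapted, oscillating weight scheme is the heart of the theorem; your sketch does not contain one, so as written the second error case does not go through.
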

\begin{proof}
Let \alg be any algorithm for \minasgkw reading at most $n-1$ bits of
advice.
We show how an adversary can construct input sequences where the cost
of \alg is arbitrarily larger than that of \opt.
We only consider sequences with at least one 1.
It is easy to see that for the unweighted version of the binary string
guessing problem, $n$ bits of advice are necessary in order to guess
correctly each time: If there are fewer than $n$ bits, there are only
$2^{n-1}$ possible advice strings, so, even if we only consider the
$2^n-1$ possible inputs
with at least one 1, there are at least two different request
strings, $x$ and $y$, which get the same advice string. \alg
will make an error on one of the strings when guessing the first bit
where $x$ and $y$ differ, since up until that point \alg
has the same information about both strings.

We describe a way to assign weights to the requests in \minasgkw such that
if \alg makes a single mistake (either guessing 0 when the correct answer is 1
or vice versa), its performance ratio is unbounded.
We  use a large number $a>1$, which
we allow to depend on $n$. 
All weights are from the interval $[1,a]$ (note that they are not necessarily integers).
We let $x=x_1, \ldots, x_n$ be the input string and set $w_1=a^{1/2}$. For $i>1$,
$w_i$ is given by:
\[
 w_i =
  \begin{cases} 
      \hfill w_{i-1}\cdot a^{(-2^{-i})},    \hfill & \text{ if $x_{i-1}=0$} \\ 
      \hfill w_{i-1}\cdot a^{(2^{-i})}, \hfill & \text{ if $x_{i-1}=1$} \\
  \end{cases}
\]
Since the weights are only a function of previous requests, they do not reveal any information to \alg about future requests.

\begin{observation} \label{min_opt_obs}
For each $i$, the following hold:
\begin{enumerate}[(a)]
\item \label{obs:zero}
      If $x_i=0$, then $w_j \leq w_{i}\cdot a^{(-2^{-n})}$ for all $j>i$.
\item \label{obs:one}
      If $x_i=1$, then $w_j \geq w_{i}\cdot a^{(2^{-n})}$ for all $j>i$.
\end{enumerate}
\end{observation}
We argue for each set of inequalities in the observation:

(\ref{obs:zero}): If $x_i=0$, for each $j>i$, $w_j=w_i\cdot a^{(-2^{-(i+1)})}\cdot a^{\sum_{k=i+2}^j
(\pm 2^{-(i+1)})}$, where the plus or minus depends on whether $x_k=0$
or $x_k=1$. The value $w_j$ is largest if all of the $x_k$ values are $1$,
in which case $w_j=w_i\cdot a^{(-2^{-j})} \leq w_i\cdot
a^{(-2^{-n})}$. 

(\ref{obs:one}): The
argument of $x_i=1$ is similar, changing minus to plus and vice versa.

We claim that if \alg makes a single mistake, its performance ratio is not bounded by any function 
of $n$. 
Indeed, if \alg guesses $0$ for a request, but the correct answer is $1$, the solution is
infeasible and \alg gets a cost of $\infty$.

We now consider the case where \alg guesses $1$ for a request $j$, 
but the correct answer is $0$.
This request gives a contribution of  $w_j =a^b$, for some $0<b<1$, 
to the cost of the solution produced by \alg.
Define $j'$ such that $w_{j'}=\max \{w_i \mid x_i=1\}$. 
Since \opt only answers $1$ if $x_i =1$, this is the largest contribution 
to the cost of \opt from a single request.

If $j'>j$, Observation~\ref{min_opt_obs}(\ref{obs:zero})
gives that $w_{j'} \leq w_j\cdot a^{(-2^{-n})}=a^b\cdot a^{(-2^{-n})}=a^{b-2^{-n}}$.
The cost of \opt is at most $n \cdot w_{j'} \leq n \cdot a^{b-2^{-n}}$.
Thus, 
\[
    \frac{\alg(x)}{\opt(x)} \geq \frac{a^b}{n \cdot a^{b-2^{-n}}}=\frac{a^{2^{-n}}}{n}.
\]
Since $a$ can be arbitrarily large (recall that it can be a function of $n$),
we see that no algorithm can be $f(n)$-competitive for any specific function $f$.

If $j'<j$, Observation~\ref{min_opt_obs}(\ref{obs:one}) gives us that 
$w_{j} \geq w_{j'}\cdot a^{(2^{-n})}$. Using $w_j=a^b$, we get
$a^{b-2^{-n}}  \geq w_{j'} $. We can repeat the argument from the case where
$j'>j$ to see that  no algorithm can be $f(n)$-competitive for any specific function $f$.
\qed
\end{proof}

In order to show that similar lower bounds apply to all minimization
problems known to be complete for \aoc,
we define a simple type of advice preserving reduction 
for 
online problems.
 These are much
less general than those defined by Sprock
in his PhD dissertation~\cite{Sprock13}, mainly because we do not allow the amount of
advice needed to change by a multiplicative factor.

Let $\OPT_{\P}(\seq)$ denote the value of the optimal solution for
request sequence $\seq$ for problem $\P$,
and let $|\seq|$ denote the number of requests
in  $\seq$.

\begin{definition}
\label{def:reduction}
Let $\P_1$ and $\P_2$ be two online minimization problems, and let $\inputs_1$ be the
set of request sequences for $\P_1$ and $\inputs_2$ be the set of request
sequences for $\P_2$. 
For a given function $g: \nats \rightarrow \reals_+$, we say that there is a
{\em length preserving $g$-reduction} from $\P_1$ to $\P_2$, if there is
a {\em transformation
function} $f: \inputs_1 \rightarrow \inputs_2$ such that 
\begin{itemize}
\item for all $\seq\in \inputs_1$, $\length{\seq} = \length{f(\seq)}$, and
\item
for every algorithm \ALGb for $\P_2$,
there is an algorithm \ALGa for $\P_1$ such that for all $\seq_1\in
\inputs_1$, the following holds:\\
If \ALGb produces a feasible solution for $\seq_2 = f(\seq_1)$ with
advice $\phi(\seq_2)$, then \ALGa, using at most $\length{\phi(\seq_2)} + g(\length{\seq_2})$ advice bits, produces a
feasible solution for $\seq_1$ such that 
\begin{itemize}
\item 
$\ALGa(\seq_1) \leq \ALGb(\seq_2) + \OPT_{\P_1}(\seq_1)$ and $\OPT_{\P_1}(\seq_1)
\geq \OPT_{\P_2}(\seq_2)$, or
\item $\ALGa(\seq_1) = \OPT_{\P_1}(\seq_1)$
\end{itemize}
\end{itemize}
\end{definition}

Note that the transformation function $f$ is length-preserving in that
the lengths of the request sequences for the two problems are identical. 
This avoids the
potential problem that the advice for the two problems could be functions
of two different sequence lengths. The amount of advice for the
problem being reduced to is allowed to be an additive function,
$g(n)$, longer than for the original problem, because this seems
to be necessary for some of the reductions showing that problems
are \aocc. Since the reductions are only used here to show that
no algorithm is $F(n)$-competitive for any function $F$,
the increase in the performance
ratio that occurs with these reductions is insignificant. 

The following lemma shows how length-preserving reductions can be used.

\begin{lemma}
\label{lemma:reduction}
Let $\P_1$ and $\P_2$ be online minimization problems.
Suppose that at least $b_1(n,c)$ advice bits are required 
to be $(c+1)$-competitive
for $\P_1$ and suppose there is a length preserving
$g(n)$-reduction from $\P_1$ to $\P_2$.
Then, at least $b_1(n,c)-g(n)$ advice bits are needed for an
algorithm for $\P_2$ to be $c$-competitive.
\end{lemma}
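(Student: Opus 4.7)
The plan is to prove the lemma by contradiction, relying almost entirely on unpacking the guarantees supplied by Definition~\ref{def:reduction}. Suppose, for the sake of contradiction, that there is an algorithm $\ALGb$ for $\P_2$ that is $c$-competitive and reads strictly fewer than $b_1(n,c) - g(n)$ advice bits on every input of length at most $n$. Applying the length-preserving $g$-reduction from $\P_1$ to $\P_2$ yields an algorithm $\ALGa$ for $\P_1$; since the reduction is length-preserving, for every $\seq_1$ of length $n$ the image $\seq_2 = f(\seq_1)$ also has length $n$, so $\ALGa$ uses at most $|\phi(\seq_2)| + g(n) < b_1(n,c)$ advice bits on $\seq_1$.

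The main step is to verify that $\ALGa$ is $(c+1)$-competitive for $\P_1$, which contradicts the hypothesis that $b_1(n,c)$ bits are required. Fix an arbitrary input $\seq_1$ and let $\seq_2 = f(\seq_1)$. The reduction gives two cases. In the easy case $\ALGa(\seq_1) = \OPT_{\P_1}(\seq_1)$, so $\ALGa$ is trivially $(c+1)$-competitive on this instance. In the other case, the reduction guarantees
\[
\ALGa(\seq_1) \leq \ALGb(\seq_2) + \OPT_{\P_1}(\seq_1) \quad \text{and} \quad \OPT_{\P_1}(\seq_1) \geq \OPT_{\P_2}(\seq_2).
\]
Using the $c$-competitiveness of $\ALGb$, pick a constant $\alpha$ with $\ALGb(\seq_2) \leq c \cdot \OPT_{\P_2}(\seq_2) + \alpha$ for all inputs. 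Substituting and applying $\OPT_{\P_2}(\seq_2) \leq \OPT_{\P_1}(\seq_1)$ gives
\[
\ALGa(\seq_1) \leq c\cdot \OPT_{\P_2}(\seq_2) + \OPT_{\P_1}(\seq_1) + \alpha \leq (c+1)\cdot \OPT_{\P_1}(\seq_1) + \alpha,
\]
so $\ALGa$ is $(c+1)$-competitive on $\P_1$ with the same additive constant $\alpha$.

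Combining the bound on $\ALGa$'s advice complexity with its $(c+1)$-competitiveness contradicts the assumed lower bound $b_1(n,c)$ on the advice needed for $(c+1)$-competitiveness on $\P_1$, completing the proof. There is no real obstacle here; the content of the lemma is a bookkeeping exercise that motivates the particular form of Definition~\ref{def:reduction}. The only subtle point worth flagging in the write-up is that the extra $+1$ in the competitive ratio comes precisely from the additive $\OPT_{\P_1}(\seq_1)$ term permitted by the reduction, and that this additive slack is absorbed cleanly thanks to the inequality $\OPT_{\P_2}(\seq_2) \leq \OPT_{\P_1}(\seq_1)$, which is exactly why the reduction definition demands it.
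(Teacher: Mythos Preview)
Your proof is correct and follows essentially the same approach as the paper's own proof: assume a too-cheap $c$-competitive $\ALGb$ for $\P_2$, invoke the reduction to obtain $\ALGa$, and chain the inequalities $\ALGa(\seq_1)\le \ALGb(\seq_2)+\OPT_{\P_1}(\seq_1)\le c\,\OPT_{\P_2}(\seq_2)+\alpha+\OPT_{\P_1}(\seq_1)\le (c+1)\,\OPT_{\P_1}(\seq_1)+\alpha$ to derive a $(c+1)$-competitive algorithm for $\P_1$ with fewer than $b_1(n,c)$ advice bits. The paper's write-up is terser but the argument is identical.
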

\begin{proof}
Let $f$ be the transformation function associated with $g$.
Suppose for the sake of contradiction that there is a (strictly) $c$-competitive algorithm $\ALGb$ for $\P_2$ with advice complexity
$b_2(n,c) < b_1(n,c)-g(n)$. Then there exists a constant $\alpha$ such that for any request
sequence $\seq_1 \in \inputs_1$, either $\ALGa(\seq_1) =\OPT_{\P_1}(\seq_1)$ or
\begin{align*}
 \ALGa(\seq_1)
& \leq \ALGb(\seq_2) + \OPT_{\P_1}(\seq_1)\\
& \leq c\cdot\OPT_{\P_2}(\seq_2) + \alpha + \OPT_{\P_1}(\seq_1)\\
& \leq (c+1)\cdot\OPT_{\P_1}(\seq_1) + \alpha,
\end{align*}
where $\seq_2 = f(\seq_1)$.
Thus, \ALGa is (strictly) $(c+1)$-competitive, with less than
$b_1(n,c)$ bits of advice, a contradiction.
\qed
\end{proof}

All known \aocc problems were proven complete using length-preserving
reductions from \minasgk, so the following holds for the weighted versions of 
all such problems:

\begin{theorem}
\label{thm:reduction}
For the weighted online versions of Vertex Cover, Cycle Finding, Dominating Set, Set Cover, an algorithm reading
less than $n-O(\log n)$ bits of advice cannot be $f(n)$-competitive for
any function $f$.
\end{theorem}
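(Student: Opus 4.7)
The plan is to combine Theorem \ref{minasglower} with Lemma \ref{lemma:reduction} by lifting the existing length-preserving reductions from \minSk (used in \cite{BFKM15} to establish \aoc-completeness of each target problem) to length-preserving $O(\log n)$-reductions from the weighted guessing problem \minasgkw to the corresponding weighted target problem. The setup in the excerpt almost writes the proof for us: we already have a strong lower bound of $n$ for \minasgkw, a generic reduction lemma, and the statement that each of Vertex Cover, Cycle Finding, Dominating Set, and Set Cover admits a length-preserving reduction from \minSk of the required form.

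The main step is to observe that each of these existing reductions is \emph{bitwise}: since the transformation $f_0 : \inputs_{\minSk} \to \inputs_{\P}$ preserves length, the decision made by the online algorithm on the $i$-th transformed request (include / exclude the vertex, edge, or set) corresponds directly to the guess $y_i$ in round $i$ of \minSk. To handle weights, I would define $f(\seq_1)$ for an input $\seq_1 = \langle (r_1,w_1),\ldots,(r_n,w_n)\rangle$ of \minasgkw by running $f_0$ on the underlying unweighted request sequence and attaching the weight $w_i$ to the $i$-th resulting request of $\Pw$. Feasibility is unchanged from the unweighted setting (it depends only on the combinatorial structure), and since both sides sum $w_i$ over exactly the positions where the bit is $1$, the cost of any feasible weighted $\Pw$-solution equals the cost of the corresponding \minasgkw-solution. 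Therefore the two conditions in Definition~\ref{def:reduction} transfer unchanged, and the $O(\log n)$ advice overhead inherited from the original reduction (e.g., for communicating $n$ or initialisation data) is preserved.

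With these length-preserving $g$-reductions in hand for $g(n)=O(\log n)$, I would finish by applying Lemma~\ref{lemma:reduction}. Theorem~\ref{minasglower} shows that for every function $c:\nats\to\reals_+$, no algorithm for \minasgkw using fewer than $n$ advice bits can be $(c+1)$-competitive; the lemma then gives a lower bound of $n-g(n) = n - O(\log n)$ advice bits for any $c$-competitive algorithm for the weighted version of the target problem. Since $c$ was an arbitrary function of $n$, no algorithm reading fewer than $n - O(\log n)$ bits of advice can be $f(n)$-competitive for any function $f$, as required.

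The main obstacle is purely verification rather than invention: one must check for each of the four problems that the reduction from \cite{BFKM15} really does translate accept/reject decisions round-by-round with no duplication or merging of positions, so that weights may be attached coordinate-wise with the cost correspondence preserved. Once this bookkeeping is dispatched, the proof is an immediate instantiation of Lemma~\ref{lemma:reduction} with Theorem~\ref{minasglower}.
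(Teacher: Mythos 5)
Your proposal follows the paper's proof: the paper likewise observes that the reductions from \cite{BFKM15} are length-preserving $O(\log n)$-reductions, attaches the weights coordinate-wise, and invokes Lemma~\ref{lemma:reduction} together with Theorem~\ref{minasglower}. One claim in your write-up is too strong, though: feasibility is \emph{not} unchanged and the costs are \emph{not} equal under these reductions. For Vertex Cover, for instance, the set $V_1\setminus\{v_n\}$ is a feasible cover even though the corresponding binary output is infeasible for \minasgkw; this is precisely why Definition~\ref{def:reduction} only demands $\ALGa(\seq_1)\le\ALGb(\seq_2)+\OPT_{\P_1}(\seq_1)$ rather than equality, and why the $O(\log n)$ extra advice bits are spent identifying the one clique vertex that \ALGb rejected (so that \ALGa can answer $1$ there anyway), not merely on ``communicating $n$ or initialisation data.'' In the weighted setting one must then also verify that this patched position contributes at most $\OPT(\seq_1)$ to \ALGa's cost --- which holds because that position carries a $1$ in $x$, so its weight is part of \OPT's cost; this is exactly the inequality $w_i\le\OPT(\seq)$ the paper checks. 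With that repair, your argument coincides with the paper's.
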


\begin{proof}
The reductions in~\cite{BFKM15} showing that these problems are \aocc
are length preserving $O(\log n)$-reductions from \minasgk, and hence,
the theorem follows from Lemma~\ref{lemma:reduction}. 
For Vertex Cover, the following $O(\log n)$-reduction can be used
(the other three reductions are given in the Appendix~\ref{reductions}):

Each input $\seq = \langle x_1, x_2, \ldots, x_n \rangle$ to the problem
\minasgk, is transformed to $f(\seq) = \langle v_1, v_2, \ldots, v_n \rangle$, where
$V=\{v_1, v_2, \ldots, v_n \}$ is the vertex set of a graph with edge
set $E=\{(v_i,v_j) \colon x_i=1 \text{ and } i<j\}.$
Let $V_1 = \{v_i \in V \colon x_1 = 1\}$.
Note that $V_1 \setminus \{v_n\}$ is a minimum vertex cover of the
graph and that no algorithm can reject more than one vertex from
$V_1$, since $V_1$ induces a clique.

The advice used by the \minasgk algorithm \ALGa consists of the advice
used by the Vertex Cover algorithm \ALGb and $O(\log n)$ bits that are
either all 0 or give (an encoding of)
an index to a position $i$ in the input sequence, such that $v_i \in
V_1$ and \ALGb rejects $v_i$.

Let $V_{\ALGb} \subseteq V$ be the vertex cover constructed by \ALGb and
let $X_{\ALGa}$ be the set of requests on which \ALGa returns a 1.
Then either $X_{\ALGa} = V_{\ALGa}$ or $X_{\ALGa} = V_{\ALGb} \cup \{v_i\}$,
where $\{v_i\} = V_1 \setminus V_{\ALGa}$.
Thus,
$\ALGa(\seq) \leq \ALGb(f(\seq))+\OPT(\seq)$, since $w_i \leq \opt(\seq)$.
\qed
\end{proof}

\section{Exponential Sparsification}
Assume that we are faced with an online problem for which we know how to obtain a
reasonable competitive ratio, possibly using advice, in the unweighted version (or when there are only few possible different weights). We use \emph{exponential sparsification}, a simple technique which can be of help when designing algorithms with advice for weighted online problems by reducing the number of different possible weights the algorithm has to handle. The first step is to partition the set of possible weights into intervals of exponentially increasing length, i.e., for some small $\varepsilon$, $0<\varepsilon <1$,
$$\mathbb{R}_+=\bigcup_{k=-\infty}^{\infty}\big[(1+\varepsilon)^k, (1+\varepsilon)^{k+1}\big).$$
How to proceed depends on the problem at hand. We now informally explain the meta-algorithm that we repeatedly use in this paper. Note that if $w_1,w_2\in \big[(1+\varepsilon)^k, (1+\varepsilon)^{k+1}\big)$ and $w_1\leq w_2$, then $w_1\leq w_2\leq (1+\varepsilon)w_1$. For many online problems, this means that an algorithm can treat all requests whose weights belong to this interval as if they all had weight $(1+\varepsilon)^{k+1}$ with only a small loss in competitiveness. 

Consider now a set of weights and let $\wmax$ denote the largest
weight in the set.
Let \kmax be the integer for which $\wmax\in \big[(1+\varepsilon)^{\kmax},
  (1+\varepsilon)^{\kmax+1}\big)$.
We say that a request with weight $w\in \big[(1+\varepsilon)^{k},
  (1+\varepsilon)^{k+1}\big)$ is \emph{unimportant} if
$k<\kmax- \ceil{\log_{1+\varepsilon}(n^2)}$.
Furthermore, we will often categorize the request as \emph{important} if
$\kmax-\ceil{\log_{1+\varepsilon}(n^2)} \leq k < \kmax+1$ and as {\em
  huge} if $k \geq \kmax+1$.
Each unimportant request has weight $w \leq (1+\varepsilon)^{k+1}\leq 
(1+\varepsilon)^{\kmax-\ceil{\log_{1+\varepsilon}(n^2)}-1+1}\leq 
\wmax/n^2$, so the total sum of the unimportant weights is $O(\wmax/n)$.
 For many weighted online problems, this means that an algorithm can easily serve the requests with unimportant weights, as follows.
In maximization problems, this is done by rejecting them. In minimization problems, it is done by accepting them.
 Thus, exponential sparsification (when applicable) essentially reduces the problem of computing a good approximate solution for a problem with $n$ distinct weights to that of computing a good approximate solution with only $O(\log_{1+\varepsilon} n)$ distinct weights. 

For a concrete problem, several modifications of this meta-algorithm might be necessary. Often, the most tricky part is how the algorithm can learn $\kmax$ without using too much advice. One approach that we often use is the following: The oracle encodes the index $i$ of the first request whose weight is close enough to $(1+\varepsilon)^{\kmax}$ that the algorithm only needs a little bit of advice to deduce $\kmax$ from the weight of this request. If it is somehow possible for the algorithm to serve all requests prior to $i$ reasonably well, then this approach works well.

Our main application of exponential sparsification is to weighted \aoc problems. We begin by considering maximization problems.
Note that no assumptions are made about the weights of $\Pw$ in Theorem~\ref{exsparse}.

\begin{theorem}
\label{exsparse}
If $\P\in\aoc$ is a maximization problem, then for any $c>1$ and $0<\varepsilon\leq 1$,
\Pw has a strictly $(1+\varepsilon)c$-competitive algorithm using $B(n,c)+O(\varepsilon^{-1}\log^2 n)$ advice bits.
 \end{theorem}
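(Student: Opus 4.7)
The plan is to apply the exponential sparsification meta-algorithm and invoke the unweighted AOC-max covering-design primitive independently on each important weight class, with the per-class designs tuned to the weighted optimum rather than to the unweighted one. Fix a weighted optimum with accept set $A$ (trivial if $A = \emptyset$), let \wmax be the largest request weight, and let \kmax satisfy $\wmax \in [(1+\varepsilon)^{\kmax}, (1+\varepsilon)^{\kmax+1})$. With $K = \lceil\log_{1+\varepsilon}(n^2)\rceil + 1 = O(\varepsilon^{-1}\log n)$, classes $k\in\{\kmax - K + 1,\ldots,\kmax\}$ are \emph{important} and all lower classes are \emph{unimportant}; no class exceeds \kmax by the choice of \wmax. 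For any AOC-max problem whose singletons are feasible, $w(A) \ge \wmax$, and the total weight of all unimportant requests is bounded by $n\cdot(1+\varepsilon)^{\kmax}/n^2 \le \wmax/n \le w(A)/n$.

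The advice would consist of four blocks: (i) the index $i^*$ of the earliest important request, in $O(\log n)$ bits; (ii) the offset $\kmax - k_{i^*} \in \{0,\ldots,K-1\}$, so that \kmax is recovered online as soon as $w_{i^*}$ is read and $k_{i^*}$ computed, in $O((\log\log n)/\varepsilon)$ bits; (iii) the pairs $(n_k, |A_k|)$ for each important $k$, where $A_k = A\cap\sigma_k$, totaling $O(\varepsilon^{-1}\log^2 n)$ bits; and (iv) for each important $k$, an index into the covering family on $[n_k]$ that underlies the AOC-max $B(n,c)+O(\log n)$ upper bound, identifying a reject set $R_k \supseteq \sigma_k \setminus A_k$ with $|R_k| \le n_k(1-1/c) + (n_k - |A_k|)/c$, at cost $B(n_k, c) + O(\log n_k)$ per class. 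Since $B(\cdot, c)$ is linear in its first argument, $\sum_k B(n_k, c) \le B(n, c)$, so the grand total is $B(n, c) + O(\varepsilon^{-1}\log^2 n)$.

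Online, the algorithm rejects every request before $i^*$, reads $w_{i^*}$ and the offset to determine \kmax at $i^*$, and thereafter rejects every unimportant request and accepts a class-$k$ important request at position $j$ within $\sigma_k$ exactly when $j\notin R_k$. The resulting accept set $T = \bigcup_k T_k$ satisfies $T_k\subseteq A_k$ and $|T_k|\ge|A_k|/c$, so $T\subseteq A$ (hence feasible by hereditary closure) and $w(T_k) \ge (1+\varepsilon)^k |A_k|/c \ge w(A_k)/(c(1+\varepsilon))$. Summing over important classes and subtracting the unimportant slack gives $w(T) \ge (1-1/n)\,w(A)/(c(1+\varepsilon))$; running the scheme with sparsification parameter $\varepsilon/3$, and treating the constantly many small $n$ for which the $1-1/n$ factor is too large by writing the optimum verbatim within the $O(\varepsilon^{-1}\log^2 n)$ budget, yields the claimed strict $c(1+\varepsilon)$-competitive ratio.

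The step I expect to be the main obstacle is decoupling the AOC-max advice upper bound from its packaging as a statement about $\P$ itself: I need to observe that the combinatorial family on $[n]$ that witnesses it (one covering every possible $r$-subset by a size-controlled superset) is independent of the feasibility structure of $\P$, so the same primitive can be instantiated on each $[n_k]$ and the totals summed by linearity of $B(\cdot, c)$. The ancillary points---feasibility of the combined $T\subseteq A$ (from hereditary closure in AOC-max), safety of rejecting pre-$i^*$ requests (each such request is unimportant, so its contribution is charged to the $w(A)/n$ slack), and online recovery of \kmax---then follow immediately from the observations above.
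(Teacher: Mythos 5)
Your overall architecture is the same as the paper's: exponential sparsification into weight classes with ratio $1+\Theta(\varepsilon)$, advice giving the index of the first important request plus an offset so the top class can be recovered online, and an independent instantiation of the unweighted covering-design primitive on each of the $O(\varepsilon^{-1}\log n)$ important classes, with the $B(n_k,c)$ terms summing to $B(n,c)$ by linearity; the class-by-class profit comparison and the treatment of small $n$ also match. However, there is one genuine gap: you anchor the classification at the globally largest request weight \wmax and justify discarding the unimportant mass via ``$w(A)\ge\wmax$'', which you yourself condition on singletons being feasible. The \aoc definition does not grant this: for a maximization problem, feasibility is only guaranteed for outputs whose accept set is contained in an optimal one, and any other output ``may or may not be feasible''. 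Since the theorem quantifies over all maximization problems in \aoc, your argument can fail: if the heaviest request is never accepted by \OPT (e.g., because accepting it is infeasible) and its weight exceeds $n^{2}\cdot\OPT(\sigma)$, then every request that \OPT does accept lands in your unimportant classes, your algorithm rejects all of them, and it earns profit $0$ while $\OPT(\sigma)>0$ --- an unbounded ratio.

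The paper avoids this by anchoring at $i_{\max}$, the heaviest request \emph{in the optimal solution}, so that the top class weight is at most $\OPT(\sigma)$ unconditionally and the unimportant mass is at most $\OPT(\sigma)/n$ with no feasibility assumption about singletons. The price is a third category, the \emph{huge} requests with weight at least $s^{m+1}$, which the algorithm simply rejects; this is safe because \OPT rejects them too, so they vanish from both sides of the comparison. With that change (and your advice blocks reinterpreted relative to the new anchor, plus a self-delimiting encoding of $n$ so the blocks can be parsed), the rest of your argument goes through essentially as written.
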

\begin{proof}
Fix $\varepsilon>0$. Let $\sigma=\langle \{r_1,w_1\},\ldots ,\{r_n, w_n\}\rangle$ be the input and let
$x=x_1\ldots x_n\in\{0,1\}^n$ specify an optimal solution for $\sigma$, with zeros indicating membership in the optimal solution. Throughout most of this
proof, we assume that $n$ is sufficiently large. The necessary conditions
are discussed at the end of the proof, along with how to handle small $n$.

Define $s=1+\varepsilon/2$. Let $\vopt=\{i\colon x_i=0\}$. Note that $\vopt$ 
contains exactly those rounds in which $\OPT$ answers $0$ and thus accepts. 
Furthermore, for $k\in\mathbb{Z}$, let $V^k=\{i\colon s^k\leq w(i) <s^{k+1}\}$ and let $\vopt^k=\vopt\cap V^k$. Finally, let $i_{\max}\in \vopt$ be such that $w(i_{\max})\geq w(i)$ for every $i\in \vopt$.

The oracle computes the unique $m\in\mathbb{Z}$ such that $i_{\max}\in \vopt^m$. 
We say that a request $r_i$ is \emph{unimportant} if $w(i)<s^{m- \ceil{\log_{s}(n^2)}}$, \emph{important} if $s^{m-\ceil{\log_{s}(n^2)}} \leq w(i) < s^{m+1}$, and \emph{huge} if $w(i)\geq s^{m+1}$. The oracle computes the index $i'$ of the first important request in the input sequence. Assume that $i' \in V^{m'}$. The oracle writes the length $n$ of the input onto the advice tape using a self-delimiting encoding\footnote{For example, $\lceil \log n\rceil$ could be written in unary ($\lceil \log n\rceil$ ones, followed by a zero) before writing $n$ itself in binary.}, and then writes the index $i'$ and the integer $m-m'$ (which is at most $\ceil{\log_{s}(n^2)}$) onto the tape, using a total of $O(\log n)$ bits.
This advice allows the algorithm to learn $m$ as soon as the first important request arrives. 
From there on, the algorithm will know if a request is important, unimportant, or huge. 
Whenever an unimportant or a huge request arrives, the algorithm answers $1$ (rejects the request). 
We now describe how the algorithm and oracle work for the important requests.

For each $0\leq j\leq \ceil{\log_{s}(n^2)}$, let $n_{m-j}=\ab{V^{m-j}}$.
For the requests (whose indices are) in $V^{m-j}$, we use the covering design based $c$-competitive algorithm for unweighted \aoc-problems. 
This requires $B(n_{m-j}, c) + O(\log n_{m-j})$ bits of advice. 
Since $B(n,c)$ is linear in $n$, this means that we  use a total of 
$$b=\sum_{j=0}^{\ceil{\log_{s}(n^2)}} \big(B(n_{m-j}, c) + O(\log n_{m-j})\big)\leq B(n,c)+O(\log_{s} n \cdot\log n)$$ 
bits of advice. 
Note that $\log_{s}(n)\leq 2\varepsilon^{-1}\log n$ for $\varepsilon/2\leq 1$, giving the bound
on the advice in the statement of the theorem.

We now prove that the algorithm achieves the desired competitiveness. 
We can ignore the huge requests, since neither $\ALG$ nor $\OPT$ accepts
any of them.
Let $V_{\ALG}$ be those rounds in which $\ALG$ answers $0$ and let $V_{\ALG}^k=V_{\ALG}\cap V^k$. 
We consider the important requests first. Fix $0\leq j\leq \ceil{\log_{s}(n^2)}$. 
Let $n^{\OPT}_{m-j}=\ab{\vopt^{m-j}}$, i.e., $n^{\OPT}_{m-j}$ is the number of requests in $V^{m-j}$ which are also in the optimal solution $\vopt$.
 By construction, we have $n^{\OPT}_{m-j}\leq c\ab{V_{\ALG}^{m-j}}$.
 Since the largest possible weight of a request in $V^{m-j}$ is at most $s$ times larger than the smallest possible weight of a request in $V^{m-j}$, 
 this implies that $w(\vopt^{m-j})\leq s \cdot c\cdot w(V_{\ALG}^{m-j})$. Thus, we get that
\begin{equation}
\label{eq:uppermax1}
\sum_{j=0}^{\ceil{\log_{s}(n^2)}}w(\vopt^{m-j})\leq
\sum_{j=0}^{\ceil{\log_{s}(n^2)}}s \cdot c\cdot w(V_{\ALG}^{m-j})= s\cdot c\cdot \ALG(\sigma).
\end{equation} 

We now consider the unimportant requests. If $r_i$ is unimportant, then $w(i)\leq s^{m'-\ceil{\log_{s}(n^2)}}\leq s^{m'} / n^2 \leq w(x_{i_{\max}})/n^2\leq \OPT(\sigma)/n^2$. This implies that
\begin{equation}
\label{eq:uppermax2}
\sum_{j=\ceil{\log_{s}(n^2)}+1}^{\infty}w(V_{\OPT}^{m-j})\leq n\frac{\OPT(\sigma)}{n^2}=\frac{\OPT(\sigma)}{n}.
\end{equation}
We conclude that
$$\OPT(\sigma)=w(\vopt)=\sum_{j=0}^{\ceil{\log_{s}(n^2)}}w(\vopt^{m-j})+\sum_{j=\ceil{\log_{s}(n^2)}+1}^{\infty}w(\vopt^{m-j}).$$
By Eq.~(\ref{eq:uppermax2}),
$$\left( 1-\frac{1}{n}\right)\OPT(\sigma) \leq \sum_{j=0}^{\ceil{\log_{s}(n^2)}}w(\vopt^{m-j}),$$
so by Eq.~(\ref{eq:uppermax1}),
$\OPT(\sigma) \leq \frac{n}{n-1}\cdot s\cdot c\cdot \ALG(\sigma)$.

Note that  for $n\geq n_0=\frac{2+2\varepsilon}{\varepsilon}$, 
$(\frac{n}{n-1})(1+\varepsilon/2)\leq (1+\varepsilon)$. For inputs of length less than $n_0$, the oracle writes an optimal solution onto the advice tape, using at
most $n_0$ bits. Since $n_0\leq \frac{4}{\varepsilon}$, $b\in O(\varepsilon^{-1}\log^2 n)$ as required. For inputs of length at least $n_0$, we use the algorithm described above. Thus, for every input $\sigma$, it holds that $\OPT(\sigma)\leq (1+\varepsilon)c\ALG(\sigma)$. Since $\varepsilon$ was arbitrary, this proves the theorem.
\qed
\end{proof}
It may be surprising that adding weights to \aocc maximization problems has almost no effect, while adding weights to \aocc minimization problems drastically changes the advice complexity.
In particular, one might wonder why the technique used in Theorem~\ref{exsparse} does not work for minimization problems. 
The key difference lies in the beginning of the sequence. Let
$w_{\max}$ be the largest weight of a request accepted by \OPT. 

For maximization problems, the algorithm can safely reject all
requests before the first important one.
For minimization problems, this approach does not work, since the algorithm must accept a superset of what \opt accepts in order to ensure that its output is feasible. Thus, rejecting an unimportant request that \opt accepts may result in an infeasible solution. This essentially means that the algorithm is forced into accepting all requests before the first important request arrives.  Accepting all unimportant
requests is no problem, since they will not contribute significantly to the total cost. However, accepting even a single huge request can give an unbounded contribution to the algorithm's cost. As shown in Theorem~\ref{minasglower}, it is not possible in general for the algorithm to tell if a request in the beginning of the sequence is unimportant or huge without using a lot of advice.

However, if the ratio of the largest to the smallest weight is not
too large, exponential sparsification is also useful for minimization problems in
\aoc. Essentially, when this ratio is bounded, it is possible for the algorithm to learn a good approximation of $w_{\max}$  when the first request arrives. This is formalized in Theorem~\ref{exsparsemin}, the proof of which is very similar to the proof of Theorem~\ref{exsparse}.

\begin{theorem}
\label{exsparsemin}
If $\P\in\aoc$ is a minimization problem and $0<\varepsilon\leq 1$, then \Pw with all weights in $[\wmin,\wmax]$
has a $(1+\varepsilon)c$-competitive algorithm with advice complexity at most
$$B(n,c)+O\left(\varepsilon^{-1} \log^2 n+ \log \left( \eps^{-1} \log \frac{\wmax}{\wmin}\right)\right)\,.$$ 
\end{theorem}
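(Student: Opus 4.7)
The plan is to mimic the exponential sparsification argument of Theorem~\ref{exsparse}, exploiting the bounded weight range to communicate the relevant scale directly up front. Set $s = 1 + \eps/2$ and, for $k \in \mathbb{Z}$, let $V^k = \{i : s^k \leq w_i < s^{k+1}\}$. Fix a specific optimal solution $y^* \in S_{\min}(\sigma)$, let $\wmax^{\OPT}$ be the largest weight accepted by $y^*$, and let $m$ be the integer with $\wmax^{\OPT} \in [s^m, s^{m+1})$. Because every weight lies in $[\wmin, \wmax]$, the integer $m$ ranges over a set of size $O(\log_s(\wmax/\wmin)) = O(\eps^{-1} \log(\wmax/\wmin))$, so the oracle can transmit $m$ with a self-delimiting encoding of $O(\log(\eps^{-1}\log(\wmax/\wmin)))$ bits (for instance, as a signed offset from $\lfloor \log_s w_1 \rfloor$, which the algorithm computes from the first request). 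Classify each request with weight in $V^k$ as \emph{unimportant} if $k < m - \lceil \log_s n^2 \rceil$, \emph{important} if $m - \lceil \log_s n^2 \rceil \leq k \leq m$, and \emph{huge} if $k > m$.

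The algorithm rejects all huge requests — safe because $y^*_i = 0$ for every huge $i$ by the choice of $m$ — and accepts all unimportant requests. For each important class $V^k$, it runs the covering-design $c$-competitive unweighted \aoc subroutine of~\cite{BFKM15} on the subsequence of requests in that class, using a block of $B(n_k, c) + O(\log n_k)$ advice bits written by the oracle to guarantee that the subroutine outputs a superset of $\{i \in V^k : y^*_i = 1\}$. The union of these supersets together with all unimportant indices contains $\{i : y^*_i = 1\}$, so $y^* \sq \ALG(\sigma)$ and the output is feasible. With $O(\eps^{-1}\log n)$ important classes and the linearity of $B(\cdot, c)$ in its first argument, the subroutine advice totals $B(n,c) + O(\eps^{-1}\log^2 n)$; adding the encoding of $m$ and a self-delimiting encoding of $n$ yields the claimed advice bound.

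For competitiveness, the total weight of unimportant requests is at most $n \cdot s^{m + 1 - \lceil \log_s n^2 \rceil} \leq s \wmax^{\OPT}/n \leq s\,\OPT(\sigma)/n$. Within each important class $V^k$, the $c$-competitiveness of the subroutine combined with the factor-$s$ spread of weights inside $V^k$ gives $w(V^k \cap V_{\ALG}) \leq sc \cdot w(V^k \cap \{i : y^*_i = 1\})$; summing over important classes yields a contribution of at most $sc\,\OPT(\sigma)$. Combining,
\[
\ALG(\sigma) \leq s\bigl(c + \tfrac{1}{n}\bigr)\OPT(\sigma) \leq (1+\eps)c\,\OPT(\sigma)
\]
for all $n$ larger than some threshold $n_0 = O(1/\eps)$, and for $n < n_0$ the oracle just writes an optimal solution verbatim in $O(1/\eps)$ bits (absorbed into the stated bound). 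The main conceptual obstacle, and the reason the analogous technique fails in light of Theorem~\ref{minasglower}, is that a minimization algorithm cannot simply discard requests arriving before it has locked in the scale $m$, since rejecting an early small-weight request that $y^*$ accepts would break feasibility. The bounded weight ratio is precisely what makes it cheap to hand $m$ to the algorithm in advance, sidestepping this issue.
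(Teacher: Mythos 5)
Your proposal is correct and follows essentially the same route as the paper's proof: exponential sparsification with $s=1+\eps/2$, communicating the scale $m$ as an offset from the weight class of the first request (which is where the bounded ratio $\wmax/\wmin$ enters), accepting unimportant requests, rejecting huge ones, and running the covering-design subroutine per important class, with the same feasibility and cost accounting. The only differences are cosmetic (a slightly looser constant in the unimportant-weight bound), so nothing further is needed.
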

\begin{proof}
Fix $\varepsilon>0$. Let $\sigma=\langle \{r_1,w_1\},\ldots ,\{r_n, w_n\}\rangle$ be the input and let
$x=x_1\ldots x_n\in\{0,1\}^n$ specify an optimal solution for $\sigma$, with ones indicating membership in the optimal solution. 
Define $s=1+\varepsilon/2$. 
Let $\vopt=\{i\colon x_i=1\}$. Note that $\vopt$ 
contains exactly those rounds in which $\OPT$ answers $1$ and thus accepts. 
Furthermore, for $k\in\mathbb{Z}$, let $V^k=\{i\colon s^k\leq w(i) <s^{k+1}\}$
 and let $\vopt^k=\vopt\cap V^k$. 
 Finally, let $i_{\max}\in \vopt$ be such that $w(i_{\max})\geq w(i)$ for every $i\in \vopt$.

The oracle computes the unique $m\in\mathbb{Z}$ such that $i_{\max}\in \vopt^m$. 
We say that a request $r_i$ is \emph{unimportant} if $w(i)<s^{m- \ceil{\log_{s}(n^2)}}$, \emph{important} if $s^{m-\ceil{\log_{s}(n^2)}} \leq w(i) < s^{m+1}$, and
 \emph{huge} if $w(i) \geq s^{m+1}$. 
The oracle also computes the unique $m\in\mathbb{Z}$ such that $s^{m'}
\leq w_1 < s^{m'+1}$ and writes the values $n$ and $m-m'$ on the tape
in a self-delimiting encoding.

The number of advice bits needed to write $m-m'$ is $O(\log (m-m'))$.
\begin{align*}
\log{(m-m')} 
& \leq \log{(\log_s{w(i_{\max})} - \log_s{w_1})}+1 \\
& \leq \log{\log_s{\frac{\wmax}{\wmin}}}+1\\
& \leq \log \left( 2\eps^{-1} \log \frac{\wmax}{\wmin} \right)+1, \text{ since } \log_s n \leq 2 \eps^{-1}\log n, \text{ for } \eps/2 \leq 1
\end{align*}
Note that since the length of $m-m'$ is not known, we need to use a self-delimiting encoding, which means that we use $O(\log n+\log ( \eps^{-1} \log \frac{\wmax}{\wmin}))$
advice bits at the beginning.

This advice allows the algorithm to learn $m$ as soon as the first
request arrives.
From there on, the algorithm will
 know if a request is important, unimportant, or huge. 
Whenever a huge request arrives, the algorithm answers $0$ (rejects the request).
When an unimportant request arrives, the algorithm answers $1$ (accepts the request).  
We now describe how the algorithm and oracle work for the important requests.

For the important requests (whose indices are) in $V^{m-j}$, we use the covering design based $c$-competitive algorithm for unweighted \aoc-problems. 
This is similar to what we do in the proof of Theorem~\ref{exsparse}.
The same calculations yield an upper bound on this advice of $B(n,c)+O(\log_{s} n \cdot\log n)$.
Note that $\log_{s}(n)\leq 2\varepsilon^{-1}\log n$ for $\varepsilon/2\leq 1$, giving the bound
on the advice in the statement of the theorem.

First, we note that the solution produced is valid, since it is a superset of the solution of \opt.

We now argue that the cost of the solution is at most $(1+\varepsilon)c$ times the cost of \opt.
Following the proof of Theorem~\ref{exsparse} and switching the roles of \opt
and \alg, we have by construction that the cost of the important requests for the algorithm
is at most $sc$ times larger than the cost for \opt on the important requests. For the huge requests, both this
algorithm and \opt incur a cost of zero.

We now consider the unimportant requests. 
If $r_i$ is unimportant, then $$w(i)< s^{m-\ceil{\log_{s}(n^2)}}\leq s^{m} / n^2 \leq w(x_{i_{\max}})/n^2\leq \OPT(\sigma)/n^2.$$ 
This implies that
\begin{equation}
\sum_{j=\ceil{\log_{s}(n^2)}+1}^{\infty}w(V^{m-j})\leq n\frac{\OPT(\sigma)}{n^2}=\frac{\OPT(\sigma)}{n}.
\end{equation}

Thus, even if the algorithm accepts all unimportant requests and \opt accepts none of them,
it only accepts an additional $\frac{\OPT(\sigma)}{n}$.
In total, the algorithm gets a cost of at most $(1+\frac{1}{n})(1+\varepsilon/2)c\OPT(\sigma)$.
For $n\geq n_0=\frac{2+\varepsilon}{\varepsilon}$, this is at most $(1+\varepsilon)c\OPT(\sigma)$.
For inputs of length less than $n_0$, the oracle will write an optimal solution onto the advice tape, using at
most $n_0$ bits. Since $n_0\leq \frac{3}{\varepsilon}$, $b\in O(\varepsilon^{-1}\log^2 n)$ as required. For inputs of length at least $n_0$, we use the algorithm described above. 
Thus, for every input $\sigma$, it holds that $\OPT(\sigma)\leq (1+\varepsilon)c\ALG(\sigma)$. 
\qed
\end{proof}

\section{Matching and Other Non-Complete \aoc Problems}
We first provide a general theorem that works for all maximization problems
 in \aoc, giving better results in some cases than that in Theorem~\ref{exsparse}. 
\begin{theorem}
\label{thm:WeightedAOCgeneral}
Let $\P \in \aoc$ be a maximization problem.
If there exists a $c$-competitive \P-algorithm reading $b$ bits of advice, then there exists a $O(c\cdot \log n)$-competitive \Pw-algorithm reading $O(b+\log n)$ bits of advice.
\end{theorem}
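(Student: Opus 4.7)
The plan is to trade a factor $O(\log n)$ in the competitive ratio for encoding only a single ``heavy'' weight bucket and a reference weight, using $O(\log n)$ extra advice bits on top of the $b$ bits already needed by the unweighted algorithm. Exponential sparsification with base $2$, combined with a pigeonhole argument, reduces the weighted problem to an essentially unweighted instance in which all surviving weights lie within a constant multiplicative factor of each other.

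Let $\vopt$ be the set of requests accepted by \OPT and $W = \max_{i \in \vopt} w_i$. Partition $\vopt$ into buckets $C_j = \{i \in \vopt : w_i \in (W/2^{j+1}, W/2^j]\}$ for $0 \leq j \leq \lceil \log (2n)\rceil$. The tail $\bigcup_{j > \lceil \log (2n) \rceil} C_j$ contributes at most $n \cdot W/(2n) = W/2 \leq \OPT/2$, so $\sum_{j=0}^{\lceil \log(2n)\rceil} w(C_j) \geq \OPT/2$, and by pigeonhole some index $j^*$ satisfies $w(C_{j^*}) = \Omega(\OPT/\log n)$.

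The oracle writes, using a self-delimiting encoding, (a) the index $i^* \in [n]$ of the first request in $C_{j^*}$, (b) the integer $j^*$, and (c) the $b$ bits of advice that the assumed $c$-competitive algorithm for \P would read on the sub-instance described below, for a total of $b + O(\log n)$ bits. The online algorithm rejects every request with index less than $i^*$. Upon reading $r_{i^*}$ it computes $W' = w_{i^*} \cdot 2^{j^*}$, which (using $w_{i^*} \in (W/2^{j^*+1}, W/2^{j^*}]$) satisfies $W' \in (W/2, W]$, so $W$ is now known up to a factor of $2$. The algorithm then defines the band $\tilde{B} = \{i \geq i^* : w_i \in (W'/2^{j^*+1}, W'/2^{j^*-1}]\}$, which is guaranteed to contain $C_{j^*}$ and in which all weights lie within a factor of $4$ of each other. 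For every subsequent request in $\tilde{B}$ the request (stripped of its weight) is handed to a simulated copy of the algorithm for \P, which consumes the stored $b$ advice bits; every request outside $\tilde{B}$ is rejected.

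For the analysis, $C_{j^*}$ is itself a feasible solution of the sub-instance on $\tilde{B}$, so by $c$-competitiveness the simulated algorithm accepts at least $|C_{j^*}|/c$ requests, each of weight at least $W'/2^{j^*+1} \geq W/2^{j^*+2}$. Since every element of $C_{j^*}$ has weight at most $W/2^{j^*}$, the total accepted weight is at least $w(C_{j^*})/(4c) = \Omega(\OPT/(c \log n))$, giving competitive ratio $O(c \log n)$. The main subtlety will be the feasibility-lifting step: verifying that a feasible output on the sub-instance $\tilde{B}$ remains feasible on the full instance when combined with rejecting everything outside $\tilde{B}$. For concrete \aoc maximization problems such as Matching or Independent Set this is immediate from locality of the constraints; in general it follows from the \aoc closure property that lets one always reject additional requests.
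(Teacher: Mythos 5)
Your proposal is correct and follows essentially the same strategy as the paper's proof: exponentially sparsify the weights, use a pigeonhole argument to find one weight class contributing an $\Omega(1/\log n)$ fraction of \OPT, identify that class with $O(\log n)$ advice bits via the index of its first request, run the unweighted $c$-competitive algorithm on that class, and reject everything else. The only differences are cosmetic (base $2$ buckets anchored at the maximum \OPT-weight, compensated by a factor-$4$ band, versus the paper's absolute intervals with base $s=3/2$), and you flag the same feasibility-lifting subtlety that the paper also relies on implicitly.
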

\begin{proof}
Use exponential sparsification on the weights with an arbitrary $\varepsilon$,
say $\varepsilon=1/2$, and let $s=1+\varepsilon$. For a given request sequence,
$\sigma$, let $w_{\max}$ be the 
maximum weight that $\opt_{\Pw}$ accepts. 
The oracle computes the unique $m \in \mathbb{Z}$ such that $w_{\max} \in [s^m,s^{m+1})$.
The important requests are those with weight $w$, where $s^{m-\ceil{\log_s (n^2)}} \leq w < s^{m+1}$. 

We consider only the $\ceil{\log_s(n^2)}+1$ important intervals, i.e., the intervals $[s^{i},s^{i+1})$,
$m-\ceil{\log_s(n^2)} \leq i \leq m$, and index them by $i$.
Let $k$ be the index of the interval of weights contributing the
most weight to $\opt_{\Pw}(\sigma)$.
The advice is a self-delimiting encoding of the index, $j$, of the first request with weight $w \in [s^{k},s^{k+1})$, plus the advice used by the given $c$-competitive
\P-algorithm.
This requires at most $b+O(\log (n))$ bits of advice. 

The algorithm rejects all requests before the $j$th. From the
$j$th request, the algorithm calculates the index $k$.
The algorithm accepts those requests which
would be accepted by the \P-algorithm when presented with the subsequence of $\sigma$ consisting of the requests with weights in $[s^{k},s^{k+1})$. Since, by exponential
sparsification, $\opt_{\Pw}$ accepts total 
weight at most $\frac{1}{n}\opt_{\Pw}(\sigma)$ from requests with
unimportant weights, and it accepts at least as much from
interval $k$ as from any of the other $\ceil{\log_s (n^2)}+1$
intervals considered, $\opt_{\Pw}$ accepts weight
at least $(1-\frac{1}{n})\frac{\opt_{\Pw}(\sigma)}{\ceil{\log_s (n^2)}+1}$
from interval $k$. 
The algorithm, \alg, described here accepts at least
$\frac{1}{c}$  as many requests as $\opt_{\Pw}$ does in this interval,
and each of the requests it accepts is at least a fraction $\frac{1}{s}$
as large as the largest weight in this interval.
Thus, $c(1+\varepsilon)\alg(\sigma) \geq \left( \frac{1-\frac{1}{n}}
{\ceil {\log_s (n^2)}+1}\right)\opt_{\Pw}(\sigma)$, so \alg is 
$O(c \log n)$-competitive.
\qed
\end{proof}

In the online matching problem, edges arrive one by one. Each request
contains the names of the edge's two endpoints (the set of
endpoints is not known from the beginning, but revealed gradually as
the edges arrive). The algorithm
must irrevocably accept or reject them as they arrive, and the
goal is to maximize the number of edges accepted. The natural greedy
algorithm for this problem is well known to be $2$-competitive. In
terms of advice, the problem is known to be in \aoc, but is not 
\aocc~\cite{BFKM15}. We remark that a version of unweighted online matching with vertex arrivals (incomparable to our weighted matching with edge arrivals) has been studied with advice in~\cite{DBLP:journals/corr/DurrKR16}.

\begin{corollary}
There exists a $O(\log n)$-competitive algorithm for Weighted Matching reading $O(\log n)$ bits of advice.
\end{corollary}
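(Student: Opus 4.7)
The plan is to obtain the corollary as a direct instantiation of Theorem~\ref{thm:WeightedAOCgeneral}. The excerpt already states that online matching in the edge arrival model belongs to \aoc as a maximization problem, so the theorem applies. What remains is to produce the ingredients: a competitive algorithm and an advice bound for the unweighted version, which we can then feed into the theorem.

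For the unweighted problem, I would invoke the classical greedy algorithm (accept an edge whenever both endpoints are still unmatched). This is well known to be $2$-competitive, since every edge in an optimal matching either coincides with a greedy edge or shares an endpoint with one. Crucially, greedy uses no advice at all, so we may take $c = 2$ and $b = 0$.

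Plugging $c = 2$ and $b = 0$ into Theorem~\ref{thm:WeightedAOCgeneral} produces an algorithm for Weighted Matching that is $O(2 \cdot \log n) = O(\log n)$-competitive and reads $O(0 + \log n) = O(\log n)$ bits of advice, which is precisely the statement of the corollary.

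I do not anticipate a serious obstacle: the only thing to double-check is that greedy, when fed the subsequence of edges whose weights fall in the selected exponential interval $[s^k, s^{k+1})$, still operates correctly as an \aoc-algorithm on that restricted request stream (i.e., feasibility of the matching is preserved when we simply ignore the other edges). This is immediate because rejecting edges outside the interval can only free up endpoints, so the greedy matching produced on the restriction is a valid matching of the full graph. Hence the hypothesis of Theorem~\ref{thm:WeightedAOCgeneral} is genuinely met and the corollary follows.
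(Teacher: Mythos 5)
Your proposal is correct and matches the paper's proof exactly: both obtain the corollary by plugging the advice-free, $2$-competitive greedy matching algorithm (so $c=2$, $b=0$) into Theorem~\ref{thm:WeightedAOCgeneral}. Your extra remark that greedy remains a valid \aoc-algorithm on the restricted subsequence of edges in the chosen weight interval is a sensible sanity check, though the paper leaves it implicit.
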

\begin{proof}
The result follows from Theorem~\ref{thm:WeightedAOCgeneral} since there exists a 
$2$-competitive algorithm without advice for (unweighted) Matching.
\qed
\end{proof}

\subsection{Lower bounds}
First, we present a result which holds for 
the weighted versions of many maximization problems in \aoc. 
It also holds for the
weighted versions of \aocc minimization problems, but 
Theorem~\ref{thm:reduction} gives a much stronger result.

\begin{theorem} \label{thm:simplelowerbound}
For the weighted online versions of Independent Set, Clique, Disjoint Path Allocation, and Matching, 
an algorithm reading $o(\log n)$ bits of advice cannot be $f(n)$-competitive for
 any function  $f$.
\end{theorem}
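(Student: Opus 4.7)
The plan is to give a unified pigeonhole lower bound that works for all four problems at once. For each problem I will construct a family of $n$ nested input sequences $\seq_1, \seq_2, \ldots, \seq_n$, with $\seq_i$ the length-$i$ prefix of $\seq_n$, in which every pair of requests is in full conflict, so that at most one request can ever be accepted in any feasible solution. The $i$-th request is given weight $a^i$, where the parameter $a=a(n)$ will be chosen as a large function of $n$ (and of the additive constant $\alpha$ from the competitiveness definition) only after $f$ has been fixed.

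The four constructions will differ only in how the conflict is arranged. For Matching, the $n$ edges form a star at a common vertex, so at most one can lie in a matching. For Independent Set the $n$ vertices are pairwise adjacent (they form a clique), so any independent set contains at most one of them. For Clique, the $n$ vertices are pairwise non-adjacent, so the only feasible cliques are singletons. For Disjoint Path Allocation, I take the underlying network to be a single edge, so all $n$ requests demand that same edge and at most one can be admitted. In every construction, $\OPT(\seq_i)=a^i$.

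Given any function $f$, I will choose $a$ so that $a>f(n)+\alpha$ and $a>\alpha$. On input $\seq_j$, an $f(n)$-competitive algorithm cannot accept a request of index $k<j$, since this yields $\ALG(\seq_j)=a^k$ and ratio $\OPT(\seq_j)/\ALG(\seq_j)=a^{j-k}\geq a>f(n)+\alpha$; nor can it reject every request, since then $\OPT(\seq_j)=a^j\geq a>\alpha$ while $\ALG(\seq_j)=0$. Hence on $\seq_j$ the algorithm must reject the first $j-1$ requests and accept exactly the $j$-th.

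Now suppose the algorithm reads at most $b(n)=o(\log n)$ advice bits. Only the first $b(n)$ bits of the advice tape can influence the algorithm's behaviour on inputs of length at most $n$, and there are only $2^{b(n)}=n^{o(1)}$ possibilities for those bits. For all sufficiently large $n$ we have $2^{b(n)}<n$, so by the pigeonhole principle there exist indices $i<j$ in $[n]$ for which the oracle's tape on $\seq_i$ and on $\seq_j$ agrees in its first $b(n)$ bits. Since $\seq_i$ is a prefix of $\seq_j$ and the algorithm is deterministic given its advice, its decisions on the first $i$ rounds of $\seq_i$ coincide with its decisions on the first $i$ rounds of $\seq_j$; in particular it rejects all of them. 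Thus $\ALG(\seq_i)=0<a^i=\OPT(\seq_i)$, contradicting $f(n)$-competitiveness. The main obstacle is verifying that the four constructions really do collapse to this single-acceptance template; once that is in place, the pigeonhole step is uniform and short.
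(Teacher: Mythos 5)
Your proof is correct and follows essentially the same route as the paper's: a nested family of prefix inputs in which any feasible solution accepts at most one request, geometrically growing weights calibrated to $f$, and a pigeonhole argument over the $2^{o(\log n)}<n$ possible advice strings forcing the algorithm to behave identically on two prefixes of different lengths. The only cosmetic differences are your explicit handling of the additive constant $\alpha$ (slightly more careful than the paper's) and the Disjoint Path Allocation instance, where the paper uses $n$ distinct pairwise-overlapping subpaths of a long path rather than $n$ copies of a single edge, which sidesteps any question of whether duplicate requests are legal.
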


To prove Theorem~\ref{thm:simplelowerbound}, we start by proving the
following lemma from which the theorem easily follows.
\begin{lemma}
Let $\P\in\aoc$ and suppose there exists a family $(\sigma_n)_{n\in\mathbb{N}}$ of $\P$-inputs with the following properties:
\begin{enumerate}
\item $\sigma_n=\langle r_1, r_2,\ldots , r_n\rangle$ consists of $n$ requests.
\item $\sigma_{n+1}$ is obtained by adding a single request to the end of $\sigma_n$.
\item If \P is a maximization problem, the feasible solutions are those in which at most one request is accepted.

If \P is a minimization problem, the feasible solutions are those in
which at least one request is accepted.
\end{enumerate}
Then, 
no algorithm for the weighted problem $\Pw$ reading $o(\log n)$ bits of advice
can be $f(n)$-competitive for any function $f$.
\end{lemma}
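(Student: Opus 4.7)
The plan is to combine a standard pigeonhole argument on the advice tape with an adversarial, position-based choice of exponentially scaled weights, and then case-analyse the algorithm's behaviour on a nested pair of inputs that share the same advice.

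Suppose, for contradiction, that an algorithm $\ALG$ for $\Pw$ reads $b(n)=o(\log n)$ advice bits and is $f$-competitive with additive constant $\alpha$ for some function $f$. Without loss of generality, $\ALG$ always reads exactly $b(N)$ bits on inputs of length at most $N$, so there are at most $2^{b(N)}$ possible advice strings. Since $2^{b(N)}=o(N)$, for all sufficiently large $N$ the pigeonhole principle yields a subset $S\subseteq [N]$ of size at least $N/2^{b(N)}$ such that all $\sigma_n$ with $n\in S$ share a common advice string. Let $i=\min S$, $j=\max S$; then the gap $D:=j-i\ge |S|-1$ tends to infinity as $N\to\infty$.

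Now assign weights that depend only on position: $w_k := MW^k$ in the maximization case, or $w_k := MW^{-k}$ in the minimization case, where $W>1$ and $M>0$ are parameters to be chosen. Because $\sigma_j$ extends $\sigma_i$ and weights are position-based, the two inputs present the algorithm with identical first $i$ request/weight pairs; combined with the shared advice, this forces the algorithm's first $i$ output bits to coincide on both inputs. In the maximization case (feasibility = at most one acceptance), either this common prefix contains no $0$, so $\ALG(\sigma_i)=0$ while $\OPT(\sigma_i)=MW^i$; or it contains a unique $0$ at some position $k^*\le i$, and then feasibility forces the $\sigma_j$-solution's only $0$ also at $k^*$, giving $\ALG(\sigma_j)\le MW^{k^*}\le MW^i$ while $\OPT(\sigma_j)=MW^j$. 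In the minimization case (feasibility = at least one acceptance), either the common prefix contains no $1$ and the algorithm's output on $\sigma_i$ is infeasible; or it contains a $1$ at some $k^*\le i$, so $\ALG(\sigma_j)\ge MW^{-k^*}\ge MW^{-i}$ while $\OPT(\sigma_j)=MW^{-j}$.

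Finally, given $f$ and $\alpha$, I pick $N$ large enough that $D\ge 1$ (assuming WLOG $f$ non-decreasing), set $W:=(f(N)+\alpha+2)^{1/D}$ so that $W^D-f(j)>\alpha+1$ for every $j\le N$, and then choose $M$ large enough to violate the $f$-competitive inequality in each sub-case. A short calculation shows $M:=\alpha+1$ suffices in the maximization case: sub-case (i) fails because $MW^i>\alpha$, and sub-case (ii) fails because $MW^i(W^D-f(j))\ge M>\alpha$. In the minimization case, one needs $MW^{-j}(W^D-f(j))>\alpha$, which (using $W^D-f(j)>\alpha+1$) is satisfied whenever $MW^{-N}>\alpha$, so e.g.\ $M:=\alpha W^N+1$ works. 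In every sub-case, one of $\sigma_i,\sigma_j$ witnesses a ratio that exceeds $f(n)\OPT+\alpha$, contradicting the hypothesis; since $f$ and $\alpha$ were arbitrary, the lemma follows. The principal obstacle is handling the additive constant in the minimization case: because $\OPT(\sigma_j)=MW^{-j}$ decays exponentially in $j$, the constant $\alpha$ threatens to dominate the multiplicative separation $W^D$, and this is precisely why the scaling factor $M$ must be taken to grow with $W^N$ rather than remaining bounded as in the maximization case.
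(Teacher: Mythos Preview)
Your approach is essentially the paper's: position-indexed exponential weights, a pigeonhole argument on the advice over the nested family, and a two-case analysis on whether the algorithm accepts (rejects) within the common prefix. Your treatment of the additive constant $\alpha$ is more explicit than the paper's, which simply lets the weights grow with $f(n)$ and leaves the constant to be swamped.

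There is, however, a circular dependency in your parameter choices. You carry out the pigeonhole step (obtaining $S$, hence $i$, $j$, and $D$) \emph{before} specifying the weights, yet you then set $W:=(f(N)+\alpha+2)^{1/D}$ and, in the minimization case, $M:=\alpha W^N+1$. But the advice string the oracle writes---and therefore $S$ and $D$---depends on the weighted input, which in turn depends on $W$ and $M$. As written, the construction is not well-defined.

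The fix is immediate: take $W:=f(N)+\alpha+2$ independently of $D$. Since the pigeonhole step guarantees $D\ge 1$ for $N$ large (regardless of what weights were used), you still have $W^D\ge W>f(j)+\alpha+1$, and both case analyses go through unchanged. This is exactly what the paper does: it fixes the base of the exponential to be $f(n)$ up front and only ever uses $n_2-n_1\ge 1$. With this repair both $W$ and $M$ depend only on $N$, $f$, and $\alpha$, so the weighted inputs are fully specified before the pigeonhole is applied, and your argument is complete.
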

\begin{proof}
Let $\ALG$ be a $\Pw$-algorithm reading at most $b=o(\log n)$ bits of advice. Let $f(n)>0$ be an arbitrary non-decreasing function of $n$. We will show that for all sufficiently large $n$, there exists an input of length $n$ such that the profit obtained by $\OPT$ is at least $f(n)$ times as large as the profit obtained by $\ALG$. Since $f(n)$ was arbitrary, it follows that $\ALG$ is not 
$f(n)$-competitive for any function $f$.

Since $b=o(\log n)$, there exists an $N \in \mathbb{Z}$ such that for any $n \geq N$, $\ALG$ reads less than $\log (n)-1$
bits of advice on inputs of length at most $n$. Fix an $n\geq N$. For $1\leq i\leq n$, define the
$\Pw$-input $\widehat{\sigma}_i=\langle\{r_1, f(n)\}, \{r_2,
f(n)^2\},\ldots ,$ $\{r_i, f(n)^i\}\rangle$. Consider the set of inputs
$\{\widehat{\sigma}_1,\ldots , \widehat{\sigma}_{n}\}$. For every
$1\leq i\leq n$, the number of advice bits read by $\ALG$ on the input
$\widehat{\sigma}_i$ is at most $\log(n)-1$ (since the length of the
input $\widehat{\sigma}_i$ is $i\leq n$). Thus, by the pigeonhole
principle, there must exist two integers $n_1, n_2$ with $n_1<n_2$
such that $\ALG$ reads the same advice on $\widehat{\sigma}_{n_1}$ and
$\widehat{\sigma}_{n_2}$. If $\ALG$ rejects all requests in
$\widehat{\sigma}_{n_1}$, then it achieves a profit of $0$ while \OPT
obtains a profit of $f(n)^{n_1}$. If $\ALG$ accepts a request in
$\widehat{\sigma}_{n_1}$, then it obtains a profit of at most
$f(n)^{n_1}$. Since $\ALG$ reads the same advice on
$\widehat{\sigma}_{n_1}$ and $\widehat{\sigma}_{n_2}$ and since the
two inputs are indistinguishable for the first $n_1$ requests, this
means that $\ALG$ also obtains a profit of at most $f(n)^{n_1}$ on the
input $\widehat{\sigma}_{n_2}$. But
$\OPT(\widehat{\sigma}_{n_2})=f(n)^{n_2}$, and hence
$\OPT(\widehat{\sigma}_{n_2})/\ALG(\widehat{\sigma}_{n_2})\geq
f(n)^{n_2-n_1}\geq f(n)$.

For minimization problems, we can use the same arguments and the
input sequence $\widehat{\sigma}_i=\langle\{r_1, f(n)^{-1}\}, \{r_2,
f(n)^{-2}\},\ldots , \{r_i, f(n)^{-n}\}\rangle$.
\qed
\end{proof}

\begin{proof}[Proof of Theorem \ref{thm:simplelowerbound}]
For Independent Set, we can use the above lemma with a family of cliques $(K_n)_{n\in\mathbb{N}}$, and for Clique, we can use a family of independent sets.
For Matching, we can use a family of stars $(K_{1,n})_{n\in\mathbb{N}}$. 
For Disjoint Path Allocation, we use a path $P_{2n} = \langle v_1, v_2, \ldots, v_{2n} \rangle$ and $r_i = \langle v_i, v_{i+1}, \ldots v_{i+n} \rangle$. 
\qed
\end{proof}

Returning to the example of Weighted Matching, we now know that
$O(\log n)$ bits suffice to be $O(\log n)$-competitive, and that
no algorithm can be $f(n)$-competitive for any function $f$ with
$o(\log n)$ bits of advice.
In order to prove that a linear number of advice bits is necessary to achieve
constant competitiveness for Weighted Matching, we use a direct product
theorem from~\cite{M16}. This uses the concept defined in~\cite{M16} 
of a problem being 
{\em $\Sigma$-repeatable}. Informally, this means that it is always possible 
to combine $r$ (sufficiently
profitable) input sequences $I_1,I_2,\ldots,I_r$ into a single input
$g(I_1,I_2,\ldots,I_r)$ such that serving this single input gives profit
close to that of serving each of the $I_i$ independently and adding
the profits.

\begin{definition}
Let $P$ be an online maximization problem and $I$ be the set of possible
input sequences. Assume that for every input in $I$, there are
only a finite number of valid outputs. Let $I^*$ be the set of concatenations of
sequences (rounds) from $I$. $P$ is \emph{$\Sigma$-repeatable with
parameters $(k_1,k_2,k_3)$} if there exists a function $g ~:~ I^*\rightarrow
I$ satisfying the following:
\begin{itemize}
\item  For every  $\sigma^* \in I^*$ with $r$ rounds,
 $|g(\sigma^*)| \leq |\sigma^*| +k_1 r$
\item For every deterministic algorithm \ALG for $P$, there is a
deterministic algorithm $\ALG^*$ for sequences from $I^*$, such that
for every  $\sigma^* \in I^*$ with $r$ rounds,
$\ALG^*(\sigma^*) \geq \ALG(g(\sigma^*)-k_2 r$,
\item Let $\OPT^*$ denote an optimal algorithm for sequences from $I^*$.
For every  $\sigma^* \in I^*$ with $r$ rounds, 
$\OPT^*(\sigma^*) \leq \OPT(g(\sigma^*))+k_3 r$.
\end{itemize}
\end{definition}
\begin{theorem}
\label{thm:matching}
An $O(1)$-competitive algorithm for Weighted Matching must read $\Omega(n)$ bits of advice.
\end{theorem}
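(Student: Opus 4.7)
The plan is to invoke the direct product theorem of Mikkelsen~\cite{M16}, which lifts per-instance advice lower bounds into linear bounds for $\Sigma$-repeatable problems. Two ingredients are needed: verifying that Weighted Matching is $\Sigma$-repeatable with good parameters, and establishing a base hardness result on a small input.

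For $\Sigma$-repeatability with parameters $(k_1, k_2, k_3) = (0, 0, 0)$, I combine $r$ Weighted Matching inputs $I_1, \ldots, I_r$ by placing each $I_j$ on a fresh disjoint set of vertices and concatenating the edge arrival sequences. Lengths add exactly, giving $k_1 = 0$. For any algorithm $\ALG$ for Weighted Matching, the induced multi-round algorithm $\ALG^*$ runs $\ALG$ on $g(\sigma^*)$ and records its decisions per round; since the subgraphs share no vertices, the matching constraints do not couple different rounds, so $\ALG^*(\sigma^*) = \ALG(g(\sigma^*))$, giving $k_2 = 0$. The same decomposition holds for optimal matchings on disjoint unions, giving $k_3 = 0$.

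For the base hardness, consider a cherry of two edges sharing a middle vertex, where the adversary chooses both edge weights and the arrival order. For any deterministic advice-free algorithm, the decision on the first edge is a function $f$ of its weight alone. For any such $f$, the adversary picks a weight $w$ and a large constant $c$: if $f(w)$ accepts, the adversary presents weights $(w, wc)$, making the algorithm miss the heavier edge; if $f(w)$ rejects, the adversary presents $(wc, w)$, forcing the algorithm onto the lighter second edge. Either way the competitive ratio is at least $c$, which can be made arbitrarily large, so no advice-free algorithm is $O(1)$-competitive on a single adversarial cherry, while a single advice bit encoding the arrival order suffices for $1$-competitiveness.

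Applying the direct product theorem to $r = \Theta(n)$ disjoint copies of this cherry produces a combined input of total length $\Theta(n)$ on which any $O(1)$-competitive algorithm must read $\Omega(r) = \Omega(n)$ bits of advice. The main obstacle is the base hardness step, which requires ruling out all possible first-edge decision rules — not just threshold strategies. This is handled by the observation that the algorithm commits to its action on the first edge having seen only its weight, so the adversary's freedom to choose the companion weight from an unbounded range defeats any fixed $f$.
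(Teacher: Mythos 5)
Your overall strategy (direct product theorem of \cite{M16} plus $\Sigma$-repeatability of Weighted Matching via disjoint unions) matches the paper, and the $\Sigma$-repeatability part is fine. The gap is in the base hardness step. The direct product theorem does not lift a worst-case adaptive-adversary lower bound; it requires, for every $c$, a \emph{single fixed probability distribution} $p_c$ with finite support (and bounded achievable profit) such that \emph{every} deterministic advice-free algorithm $\D$ satisfies $\E_{p_c}[\OPT(\sigma)] \geq c\cdot\E_{p_c}[\D(\sigma)]$. Your cherry argument instead constructs, for each decision rule $f$, a tailored bad input with weights drawn from an unbounded range. That establishes that no advice-free algorithm is $O(1)$-competitive in the worst case, but it does not produce the distribution the theorem needs, and the unbounded weight range also violates the finite-support/bounded-profit hypothesis.

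Worse, the two-edge cherry provably cannot be repaired into such a distribution for $c>2$: for any fixed distribution over cherries $(w_1,w_2)$, the better of the two pure strategies ``always accept the first edge'' and ``always reject the first edge'' earns at least $\max(\E[w_1],\E[w_2]) \geq \tfrac12\E[w_1+w_2] \geq \tfrac12\E[\max(w_1,w_2)] = \tfrac12\E[\OPT]$, so some deterministic algorithm is $2$-competitive in expectation against \emph{any} cherry distribution. This is exactly why the paper uses a longer gadget: a star $K_{1,k}$ with $k=2c-1$ edges of weights $2^1,\dots,2^k$, where after each edge the adversary stops with probability $1/2$. There every deterministic strategy has expected profit exactly $2$ while $\E[\OPT]=k+1$, giving the required expected gap of $c$ for arbitrary $c$ with finite support and bounded profit. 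You need to replace your adaptive cherry argument with a distributional gadget of this kind before the direct product theorem applies.
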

\begin{proof}
We prove the lower bound using a direct product theorem~\cite{M16}. 
According to~\cite{M16}, 
it suffices to show that: (i) Weighted Matching is $\Sigma$-repeatable, and
(ii) for every $c$, there exists a probability distribution $p_c$ with finite 
support such that for every deterministic algorithm $\D$ without advice, 
it holds that $\E_{p_c}[\OPT(\sigma)]\geq c\cdot \E_{p_c}[\D(\sigma)]$. 
Also, there must be a finite upper bound on the profit an algorithm can obtain on an input in the support of $p_c$.

It is trivial to see that Weighted Matching is $\Sigma$-repeatable. Fix $c\geq 1$ and let $k=2c-1$. We define the probability distribution $p_c$ by specifying a probabilistic adversary: The input graph will be a star $K_{1,m}$ consisting of $m$ edges for some $1\leq m\leq k$. In round $i$, the adversary reveals the edge $e_i=(v,v_i)$ where $v_i$ is a new vertex and $v$ is the center vertex of the star. The edge $e_i$ has weight $2^i$. If $i<k$, then with probability $1/2$ the adversary will proceed to round $i+1$, and with probability $1/2$ the input sequence will end. If the adversary reaches round $k$, it will always stop after revealing the edge $e_k$ of round $k$. Note that the support of $p_c$ and the largest profit an algorithm can obtain on any input in the support of $p_c$ are both finite.

Let $X$ be the random variable which denotes the number of edges revealed by the adversary. Note that $\Pr(X=j)=2^{-j}$ if $1\leq j<k$. Consequently, 
\begin{equation}
\Pr(X=k)=1-\Pr(X<k)=1-\sum_{i=1}^{k-1}2^{-i}=2^{-(k-1)}.
\end{equation}
Let $\D$ be a deterministic algorithm without advice. We may assume that $\D$ decides in advance on some $1\leq j\leq k$ and accepts the edge $e_j$ (the only other possible deterministic strategy it to never accept an edge, but this is always strictly worse than following any of the $k$ strategies that accepts an edge). If $X<j$, then the profit obtained by $\D$ is zero. If $X\geq j$, then $\D$ obtains a profit of $2^j$. It follows that 
\begin{align*}
\E[\D(\sigma)]=\Pr(X\geq j)2^j=(1-\Pr(X<j))2^j=2^{-(j-1)}2^j=2.
\end{align*}
The optimal algorithm $\OPT$ always accepts the last edge of the input. Thus, if $X=j$, then the profit of $\OPT$ is $2^j$. It follows that
\begin{align*}
\E[\OPT(\sigma)]=\sum_{j=1}^{k}\Pr(X=j)2^j=\sum_{j=1}^{k-1}\left(2^{-j}2^j\right)+2^{-(k-1)}2^k=k+1.
\end{align*}
Thus, we conclude that $\E[\OPT(\sigma)]\geq \frac{k+1}{2}\E[\D(\sigma)]= c\E[\D(\sigma)]$.
\qed
\end{proof}
 In particular, we cannot achieve constant competitiveness using $O(\log n)$ bits of advice for Weighted Matching. We leave it as an open problem to close the gap between $\omega(1)$ and $O(\log n)$ on the competitiveness of Weighted Matching algorithms with advice complexity $O(\log n)$.

\section{Scheduling with Sublinear Advice}
For the scheduling problems studied, the requests are jobs, each
characterized by its size.
Each job must be assigned to one of $m$ available machines.
If the machines are {\em identical}, the {\em load} of a job on any machine
is simply its size.
If the machines are {\em related}, each machine has a speed, and the
load of a job, $J$, assigned to a machine with speed $s$ is the size of
$J$ divided by $s$.
If the machines are {\em unrelated}, each job arrives with a vector
specifying its load on each machine.

Consider
a sequence $\sigma=\langle r_1,\ldots , r_n\rangle$ of $n$ jobs that arrive online. Each job $r_i\in\sigma$ has an associated weight-function $w_i:[m]\rightarrow\mathbb{R}_{+}$. Upon arrival, a job must irrevocably be assigned to one of the $m$ machines. The \emph{load} $L_j$ of a machine $j\in[m]$ is defined as $L_j=\sum_{i\in M_j}w_i(j)$ where $M_j$ is the set of (indices of) jobs scheduled on machine $j$. The \emph{total load} of a schedule for $\sigma$ is the vector $\vl=(L_1,\ldots , L_m)$. 
We say that $(L_1,\ldots , L_m) \leq (L'_1,\ldots , L'_m)$ if and only if
$L_i \leq L'_i$ for $1 \leq i \leq m$. 
A scheduling problem of the above type is specified by an \emph{objective function} $f:\mathbb{R}_+^m\rightarrow \mathbb{R}_+$ and by specifying if the goal is to minimize or maximize $f(\vl)=f(L_1,\ldots, L_m)\in\mathbb{R}_+$. 
We assume that $f$ is non-decreasing, i.e., $f(\vl)\leq f(\vl')$ for all $\vl\leq \vl'$. 
Some of the classical choices of objective function include: 
\begin{itemize}
\item Minimizing the $\ell_p$-norm $f_p(\vl)=f_p(L_1,\ldots , L_m)=\|(L_1,\ldots , L_m)\|_p$ for some $1\leq p\leq \infty$. That is, for $1\leq p<\infty$, the goal is to minimize $\left(\sum_{j\in[m]} L_j^p\right)^{1/p}$ and for $p=\infty$, the goal is to minimize the makespan $\max_{j\in [m]} L_j$.
\item Maximizing the minimum load $f(\vl)=\min_{j\in [m]} L_j$. 
This is also known as machine covering.
Note that this objective function is not a norm\footnote{$f$ is a norm if $f(\alpha \vv)=|\alpha|f(\vv)$, $f(\vu+\vv)\leq f(\vu)+f(\vv)$, and $f(\vv)=0 \Rightarrow \vv=\mathbf{0}$.}, but it does satisfy that $f(\alpha \vl)=\alpha f(\vl)$ for every $\alpha\geq 0$ and $\vl\in\mathbb{R}^m_+$.
\end{itemize}
We begin with a result for \emph{unrelated} machines. 

\begin{theorem}
\label{thm:unrelatedmin}
Let \P be a scheduling problem on $m$ unrelated machines where the goal is to minimize an objective function $f$. 
Assume that $f$ is a non-decreasing norm. Then, for $0<\varepsilon\leq 1$, there exists a $(1+\varepsilon)$-competitive $\P$-algorithm reading $O\big( (\frac{4}{\varepsilon} \log(n)+2)^m \log (n)\big)$ bits of advice. 
In particular, if $m=O(1)$ and $\varepsilon=\Omega(1)$, then there exists a $(1+\varepsilon)$-competitive algorithm reading $O(\polylog(n))$ bits of advice.
\end{theorem}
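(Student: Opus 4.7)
The plan is to extend the exponential sparsification meta-algorithm to the $m$-dimensional load vectors that arise on unrelated machines. I set $s=1+\varepsilon'$ for an $\varepsilon'=\Theta(\varepsilon)$ to be fixed at the end so that all the multiplicative slacks compose into a single factor of at most $(1+\varepsilon)$, and partition $\posr$ into the usual exponential buckets $[s^k,s^{k+1})$. The key structural fact I will use is that, because $f$ is a non-decreasing norm, it is monotone and positively homogeneous, so producing a schedule with $\vl^{\ALG}\leq(1+\varepsilon)\vl^{\OPT}$ coordinatewise already yields $f(\vl^{\ALG})\leq (1+\varepsilon)f(\vl^{\OPT})$.

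To achieve such a coordinatewise bound, I associate to each job $r_i$ an $m$-dimensional \emph{signature} $\tau(i)$ whose $j$-th entry records the classification of $w_i(j)$ as \emph{huge}, \emph{important-in-bucket-$k$}, or \emph{unimportant} with respect to machine $j$, exactly as in Section~4. The oracle first writes the input length $n$ together with, for each $j$, the integer $\kmax^{(j)}$ such that $L^{\OPT}_j \in [s^{\kmax^{(j)}},s^{\kmax^{(j)}+1})$, in a self-delimiting encoding using $O(m\log n)$ bits; from these quantities the algorithm can compute $\tau(i)$ the moment $r_i$ arrives. Each coordinate of a signature takes at most $O(\varepsilon^{-1}\log n)$ possible values, so there are at most $(\tfrac{4}{\varepsilon}\log n + 2)^m$ signatures in total once the constants are chosen appropriately.

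For every signature $\tau$, the oracle writes the vector $(n^\tau_1,\ldots,n^\tau_m)\in\mathbb{Z}_{\geq 0}^m$ of counts, recording how many jobs of signature $\tau$ \OPT assigns to each machine, in $O(m\log n)$ bits; the grand total is $O((\tfrac{4}{\varepsilon}\log n+2)^m\cdot m\log n)$. Using $m\leq (\tfrac{4}{\varepsilon}\log n+2)^{m-1}$ for $n$ sufficiently large, and handling small $n$ by writing the entire schedule directly in $O(n\log m)=O(\polylog n)$ bits, this matches the claimed bound $O((\tfrac{4}{\varepsilon}\log n+2)^m\log n)$. The algorithm then serves each arriving $r_i$ by computing $\tau(i)$, picking any machine $j$ with remaining count $n^{\tau(i)}_j>0$, assigning $r_i$ there, and decrementing; since the counts sum to the correct total per signature, feasibility of this online matching is automatic.

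For the analysis I fix a machine $j$ and decompose both $L^{\ALG}_j$ and $L^{\OPT}_j$ by signature. By construction both schedules place the same number of jobs of each signature on $j$; within an important signature group the individual loads on $j$ differ by at most a factor $s$, within an unimportant group each load is at most $L^{\OPT}_j/n^2$, and huge groups never land on $j$ in either schedule because $w_i(j)>L^{\OPT}_j$ forbids it. Summing yields $L^{\ALG}_j\leq s\cdot L^{\OPT}_j + L^{\OPT}_j/n$, hence $\vl^{\ALG}\leq (s+1/n)\vl^{\OPT}$ coordinatewise, and the norm inequalities above give $f(\vl^{\ALG})\leq(1+\varepsilon)f(\vl^{\OPT})$ after tuning $\varepsilon'$ and the threshold on $n$. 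The main obstacle is the bookkeeping that composes the several multiplicative slacks (bucket rounding, the unimportant-tail contribution, and the non-decreasing-norm inequality) into one clean $(1+\varepsilon)$ factor; the signature-matching strategy itself is online-implementable for free, as counts are consumed in arrival order and no look-ahead is required.
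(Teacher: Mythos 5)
Your overall architecture is sound and in fact diverges from the paper's proof in an interesting way: the paper's oracle writes only the \emph{total} number of jobs of each type and lets the algorithm recompute an optimal schedule of the rounded instance $\widehat{\sigma}$, bounding the result via subadditivity of the norm ($\|\vl_{\imp}+\vl_{\unimp}\|\leq\|\vl_{\imp}\|+\|\vl_{\unimp}\|$); you instead write per-machine counts and mimic \OPT's assignment signature by signature, which yields the stronger coordinatewise bound $\vl^{\ALG}\leq (s+1/n)\vl^{\OPT}$ and then only needs monotonicity and homogeneity of $f$. That variant is correct and arguably cleaner, and the extra factor $m$ in the advice is absorbed as you say.

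However, there is a genuine gap in how the algorithm learns the thresholds. You have the oracle write the absolute bucket indices $\kmax^{(j)}$ with $L^{\OPT}_j\in[s^{\kmax^{(j)}},s^{\kmax^{(j)}+1})$ ``in $O(m\log n)$ bits.'' The weights are arbitrary positive reals, so $L^{\OPT}_j$, and hence $\kmax^{(j)}=\lfloor\log_s L^{\OPT}_j\rfloor$, is an arbitrary integer whose magnitude is not bounded by any function of $n$; a self-delimiting encoding of it costs $\Theta(\log|\kmax^{(j)}|)$ bits, which can dwarf the entire advice budget. (This is exactly the phenomenon exploited in the lower bound of Theorem~\ref{minasglower}.) The paper circumvents this by never writing an absolute index: it uses a single global $k$ with $s^k\leq\|\vl_{\OPT}\|<s^{k+1}$, writes the \emph{index} $i'$ of the first important job plus the offset $\Delta_{i'}=k-k_{i'}\leq\lceil\log_s(n^2)\rceil$, and lets the algorithm recover $k$ from the observed weight $w_{i'}(j')$; all jobs before $i'$ are unimportant by definition and can be scheduled greedily. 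If you insist on per-machine thresholds $\kmax^{(j)}$, you need a per-machine anchor job for each $j$, i.e., essentially the phase structure used in Theorem~\ref{thm:seminorm}, which costs additional bookkeeping you have not supplied (and it also breaks your claim that the algorithm can compute every signature ``the moment $r_i$ arrives,'' since signatures involving machines whose anchor has not yet appeared are not yet computable). So the encoding step as written fails; it is repairable, but the repair is a missing idea rather than routine bookkeeping.
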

\begin{proof}
Since the objective function $f$ is a norm on $\mathbb{R}^m$, we will
denote it by $\|\cdot \|$.
Let $\mathbf{1}_j$ be the $j$th unit vector
(the vector with a $1$ in the $j$th coordinate and $0$ elsewhere). 

Fix an input sequence $\sigma$. 
The oracle starts by computing an arbitrary optimal schedule for $\sigma$. 
Throughout most of this
proof, we assume that $n$ is sufficiently large. The necessary conditions
are discussed at the end of the proof, along with how to handle small $n$.

Let $\vl_{\OPT}$ be the load-vector of this schedule. Thus,
$\OPT(\sigma)=\|\vl_\OPT\|$.
Let $s=1+\varepsilon /2$ and let $k$ be the unique integer such that
$s^{k}\leq \|\vl_\OPT\| < s^{k+1}$. 
A job $r_i\in\sigma$ is said to be \emph{unimportant}  
if there exists a machine $j\in[m]$ such that $\|w_i(j)\mathbf{1}_j\|<
s^{k-\ceil{\log_{s}(n^2)}}$.
A job which is not unimportant is {\em important}. 

The oracle uses $O(\log n)$ bits to encode $n$ using a self-delimiting
encoding. 
It then writes the index $i'$ of the first important job $r_{i'}$ onto
the advice tape 
(or indicates that $\sigma$ contains no important jobs) using $\ceil{\log(n+1)}$ bits. Let $j'$ be the machine minimizing $\|w_{i'}(j')\mathbf{1}_{j'}\|$,
 where ties are broken arbitrarily. 
 The oracle also writes $\Delta_{i'} = k-k_{i'}$, where $k_{i'}$ is the unique integer 
 such that $s^{k_{i'}}\leq \|w_{i'}(j')\mathbf{1}_{j'}\|< s^{k_{i'}+1}$ onto the advice tape using $O(\log \log_s n)$ bits.

\emph{Scheduling unimportant jobs.} If a job $r_i\in\sigma$ is unimportant, then the algorithm schedules the job on the machine $j$ minimizing $\|w_i(j)\mathbf{1}_j\|$ where ties are broken arbitrarily. We now explain how the algorithm knows if a job is unimportant or not. If $r_i\in\sigma$ is a job that arrives before the first important job, i.e., if $i< i'$, then $r_i$ is unimportant by definition. When job $r_{i'}$ arrives, the algorithm can deduce $k$ since it knows $\Delta_{i'}$ from the advice and since it can compute $\min_{j} \|w_{i'}(j)\mathbf{1}_{j}\|$ without help. Knowing $k$ (and the number of jobs $n$), the algorithm is able to tell if a job is unimportant or not.

\emph{Scheduling important jobs.}
We now describe how the algorithm schedules the important jobs. To this end, we define the type of an important job. For an important job $r_i$, let
 $\Delta_i(1),\ldots , \Delta_i(m)$ be defined as follows: 
 For $1 \leq j \leq m$, if there exists an integer
 $k_i(j) \leq k$ such that $s^{k_i(j)}\leq \|w_{i}(j)\mathbf{1}_{j}\| <s^{k_i(j)+1}$, 
 then $\Delta_i(j)=k-k_i(j)$ (since $r_i$ is important, $\Delta_i(j) \leq \ceil{\log_{s}(n^2)}$). 
 If no such integer exists, then it must be the case that $ \|w_{i}(j)\mathbf{1}_{j}\|\geq s^{k+1}> \|\vl_\OPT\|$. In this case, we let $\Delta_i(j)=\bot$ be a dummy symbol. The \emph{type} of $r_i$ is the vector $\mathbf{\Delta}_i=(\Delta_i(1),\ldots ,\Delta_i(m))$. Note that there are only $(\lceil \log_{s}(n^2)\rceil+2)^m$ different types. For each possible type $\mathbf{\Delta}=(\Delta(1),\ldots , \Delta(m))$, the oracle writes the number, $a_{\mathbf{\Delta}}$, of jobs of type $\mathbf{\Delta}$ onto the advice tape. This requires at most $(\lceil \log_{s}(n^2)\rceil+2)^m\lceil \log(n+1)\rceil$ bits of advice.

Note that since $\|\cdot\|$ is a norm, if $r_i\in\sigma$ is of type $\mathbf{\Delta}_i=(\Delta_i(1),\ldots , \Delta_i(m))$, then $s^{k-\Delta_i(j)}\|\mathbf{1}_j\|^{-1}\leq w_i(j)\leq s^{k-\Delta_i(j)+1}\|\mathbf{1}_j\|^{-1}$ if $\Delta_i(j)\neq\bot$ and $w_i(j)>\|\vl_\OPT\|\|\mathbf{1}_j\|^{-1}$ if $\Delta_i(j)=\bot$. The algorithm computes an optimal schedule $\widehat{S}_{\text{\imp}}$ for the input $\widehat{\sigma}$ which for each possible type $\mathbf{\Delta}$ contains $a_{\mathbf{\Delta}}$ jobs with weight-function $\widehat{w}_{\mathbf{\Delta}}$ where $\widehat{w}_{\mathbf{\Delta}}(j)=s^{k-\Delta(j)+1}\|\mathbf{1}_j\|^{-1}$ if $k(j)\neq\bot$ and $\widehat{w}_{\mathbf{\Delta}}(j)=\infty$ otherwise. This choice of weight-function ensures that if $r_{i}\in\sigma$ is a job of type $\mathbf{\Delta}_i$, then for each $j$ with $\Delta_i(j)\neq\bot$, 
\begin{equation}
\label{eq:unrelated}
w_{i}(j)< \widehat{w}_{\mathbf{\Delta}_i}(j)\leq s \cdot w_{i}(j).
\end{equation}
When an important job of $\sigma$ arrives, the algorithm computes the
type of the job. Based solely on this type, the algorithm schedules
the important jobs in $\sigma$ by following the schedule
$\widehat{S}_{\imp}$ for $\widehat{\sigma}$. Let $\vl_{\imp}$ be the
load-vector of the important jobs of $\sigma$ scheduled by
$\ALG$. Note that by Eq.~(\ref{eq:unrelated}), the weight-function of
an important job of $\sigma$ is strictly smaller (for all machines)
than the weight-function of the corresponding job of
$\widehat{\sigma}$. Thus, since $f$ is non-decreasing $\|\vl_{\imp}\|$ is bounded from above by the cost of the schedule $\widehat{S}_{\imp}$ for $\widehat{\sigma}$.

\emph{Putting it all together.} The optimal schedule for $\sigma$
computed by the oracle induces a schedule of $\widehat{\sigma}$. Let $\widehat{\vl}$ be the load-vector of this schedule. By Eq.~(\ref{eq:unrelated}), we get that $\|\widehat{\vl}\| \leq s \|\vl_\OPT\|$. Thus, the cost of $\widehat{S}_{\text{\imp}}$ (which was an optimal scheduling of $\widehat{\sigma}$) is at most $\|\widehat{\vl}\|\leq s\|\vl_\OPT\|$.

Let $\vl_{\unimp}$ be the load-vector of the unimportant jobs scheduled by $\ALG$. 
Furthermore, let $M_j$ be the set of indices of the unimportant jobs scheduled by $\ALG$ on machine $j$. By subadditivity, 
\begin{align*}
\|\vl_{\unimp}\|&=\left\|\sum_{j=1}^m\sum_{i\in M_j}w_i(j)\mathbf{1}_j\right\| \leq \sum_{j=1}^m\sum_{i\in M_j} \|w_i(j)\mathbf{1}_j\| < \sum_{j=1}^m\sum_{i\in M_j} s^{k-\lceil \log_{s}(n^2)\rceil} \\
&\leq n \frac{\|\vl_\OPT\|}{n^2}\leq \frac{\|\vl_\OPT\|}{n}. 
\end{align*}
We are finally able to bound the cost of the entire schedule created by $\ALG$:
\begin{align*}
\ALG(\sigma)=\|\vl_{\imp}+\vl_{\unimp}\|\leq \|\vl_{\imp}\|+\|\vl_{\unimp}\|\leq (s+1/n)\|\vl_\OPT\|
\end{align*}
Recall that $s=1+\varepsilon / 2$. Thus, if $n \geq 2/\varepsilon$, then $\ALG(\sigma)\leq (1+\varepsilon)\OPT(\sigma)$. For inputs of length less than $2/\varepsilon$, the oracle can simply encode the optimal solution using at most $\frac{2}{\varepsilon}\lceil \log m\rceil$ bits of advice. The total amount of advice used by our algorithm is at most
\begin{align*}
(\lceil \log_{s}(n^2)\rceil+2)^m\lceil \log(n+1)\rceil+O(\log n+\log \log_s n)=O\big( (4\varepsilon^{-1} \log(n)+2)^m \log (n)\big).
\end{align*}
\qed
\end{proof}

For the following discussion, assume that $\varepsilon=\Theta(1)$. We remark that the $(1+\varepsilon)$-competitive algorithm in Theorem~\ref{thm:unrelatedmin} is only of interest if the number of machines $m$ is small compared to the number of jobs $n$. As already noted, the most interesting aspect of Theorem~\ref{thm:unrelatedmin} is that our algorithm uses only $\polylog(n)$ bits of advice if $m$ is a constant. More generally, if $m=o(\log n / \log\log n)$, then our algorithm will use $o(n)$ bits of advice. On the other hand, if $m=\Theta(\log n)$, then our algorithm uses $\Omega(\log(n)^{\log (n)})$ bits of advice, which is worse than the trivial $1$-competitive algorithm which uses $n\lceil \log m\rceil=O(n\log\log n )$ bits of advice when $m=\Theta(\log n)$.

The advice complexity of the algorithm in Theorem~\ref{thm:unrelatedmin} depends on the number of machines $m$ because we want the result to hold even when the machines are unrelated. 
We now show that when restricting to related machines, we can obtain a $(1+\varepsilon)$-competitive algorithm using $O(\varepsilon^{-1}\log^2 n)$ bits of advice, independent of the number of machines.
The proof resembles that of Theorem~\ref{thm:unrelatedmin}.
The main difference is that we are able to reduce the number of types to $O(\log^2 n)$.

\begin{theorem}\label{thm:relatedmin}
Let \P be a scheduling problem on $m$ related machines where the goal is to minimize an objective function $f$. Assume that $f$ is a non-decreasing norm. Then, for $0<\varepsilon\leq 1$, there exists a $(1+\varepsilon)$-competitive $\P$-algorithm with advice complexity $O\big(\varepsilon^{-1}\log^2 n\big).$
\end{theorem}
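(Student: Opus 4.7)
The plan is to mirror the proof of Theorem~\ref{thm:unrelatedmin} essentially verbatim, and only substitute a sharper bound on the number of possible types. Fix an optimal schedule with load vector $\vl_{\OPT}$, let $s = 1+\varepsilon/2$, let $k$ be the unique integer with $s^k \leq \|\vl_{\OPT}\| < s^{k+1}$, and carry over the definitions of \emph{unimportant}, \emph{important}, and \emph{huge} jobs exactly as in the unrelated case. The oracle writes $n$, the index $i'$ of the first important job, and $\Delta_{i'}$ in a self-delimiting way (using $O(\log n + \log\log_s n)$ bits), so that the algorithm learns $k$ when $r_{i'}$ arrives and can classify jobs afterwards. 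Unimportant jobs are assigned greedily to the machine minimizing $\|w_i(j)\mathbf{1}_j\|$, and the contribution of all unimportant jobs to $\|\vl_{\unimp}\|$ is bounded by $\|\vl_{\OPT}\|/n$ exactly as before.

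The crucial saving comes from the structure of related machines: here $w_i(j) = p_i/s_j$ for a single size parameter $p_i$ and fixed machine speeds $s_1,\dots, s_m$. Hence $\|w_i(j)\mathbf{1}_j\| = p_i\cdot \|\mathbf{1}_j\|/s_j$, and the quantity $k_i(j)$ defined in the proof of Theorem~\ref{thm:unrelatedmin} satisfies
\begin{equation*}
k_i(j) = \bigl\lfloor \log_s p_i + \log_s(\|\mathbf{1}_j\|/s_j)\bigr\rfloor,
\end{equation*}
so $k_i(j) - k_i(1)$ depends on $j$ but not on $i$ (up to rounding, which can be absorbed into a slightly coarser binning without affecting the asymptotics). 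Consequently the entire type vector $\mathbf{\Delta}_i = (\Delta_i(1),\dots,\Delta_i(m))$ is determined by a single integer, namely $k_i(1)$, together with the fixed machine offsets. Since $\Delta_i(1) \in \{0,1,\dots,\lceil \log_s(n^2)\rceil\} \cup \{\bot\}$ for any important (or huge) job, there are only $O(\log_s n) = O(\varepsilon^{-1}\log n)$ distinct types.

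The algorithm can compute the machine offsets (and hence recover the full type vector from $\Delta_i(1)$) on its own, since it knows the speeds $s_1,\dots,s_m$ and the norm. For each of the $O(\varepsilon^{-1}\log n)$ types, the oracle writes the count $a_{\mathbf{\Delta}}$ of jobs of that type using $O(\log n)$ bits, for a total of $O(\varepsilon^{-1}\log^2 n)$ bits. As in Theorem~\ref{thm:unrelatedmin}, the algorithm precomputes an optimal schedule of the rounded instance $\widehat{\sigma}$ based on the counts and simulates it online for the important jobs. The rounding gives $\widehat{w}_{\mathbf{\Delta}_i}(j) \leq s\cdot w_i(j)$, hence $\|\vl_{\imp}\| \leq s\|\vl_{\OPT}\|$, and combining with the unimportant contribution yields $\ALG(\sigma) \leq (s+1/n)\|\vl_{\OPT}\| \leq (1+\varepsilon)\OPT(\sigma)$ for $n$ sufficiently large. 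Short inputs are handled by writing an optimal assignment directly, at a cost subsumed in the advice bound.

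The main obstacle is verifying cleanly that the per-machine offsets $\log_s(\|\mathbf{1}_j\|/s_j)$ behave correctly under the floor, so that a one-parameter family of types really does suffice; any slop from rounding can be handled either by slightly widening the type of $r_i$ to cover both candidate offsets (doubling the number of types, still $O(\varepsilon^{-1}\log n)$) or by shifting to the refined grid $s^{k_i(1)}\cdot \|\mathbf{1}_j\|/s_j$ directly, which is a deterministic function of $k_i(1)$ and $j$. Either way the count of types, and therefore the advice bound, remains $O(\varepsilon^{-1}\log^2 n)$.
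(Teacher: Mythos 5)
Your proposal is correct and follows essentially the same route as the paper: exploit the fact that a job on related machines is described by a single scalar, so one rounding into powers of $s$ suffices and the number of types drops to $O(\varepsilon^{-1}\log n)$, with unimportant jobs sent to the machine minimizing $\|\mathbf{1}_j\|/C_j$. The ``obstacle'' you flag about per-coordinate floors is a non-issue under your second fix, which is precisely the paper's definition: the type is the single integer $t_i$ with $s^{k-t_i}\leq w_i\,\|\mathbf{1}_1\|/C_1 < s^{k-t_i+1}$, and the rounded load on machine $j$ is then a deterministic function of $t_i$ and $j$.
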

\begin{proof}
Since the objective function $f$ is a norm on $\mathbb{R}^m$, we will denote it by $\|\cdot \|$. Fix an input sequence $\sigma$. The oracle starts by computing an arbitrary optimal schedule for $\sigma$. Let $s=1+\varepsilon/2$. The oracle uses $O(\log n)$ bits to encode $n$ using a self-delimiting encoding. 

Let $C_1,\ldots , C_m$ be the speeds of the $m$ machines. 
Assume without loss of generality that $\|\mathbf{1}_j\|/C_j$ attains its minimum value when $j=1$. 
Define $B=\|\mathbf{1}_1\|/C_1$. Let $\vl_{\OPT}$ be the load-vector of the fixed optimal schedule. 
Thus, $\OPT(\sigma)=\|\vl_\OPT\|$. Let $k$ be the unique integer such that $s^{k}\leq \|\vl_\OPT\| < s^{k+1}$. 
A job $r_i\in\sigma$ is said to be \emph{unimportant} if its weight,
$w_i$, satisfies $w_iB < s^{k-\ceil{\log_{s}(n^2)}}$. 
A job which is not unimportant is important. 
Note that $w_iB$ is always bounded from above by $\|\vl_\OPT\|$ since $r_i$ must be placed on some machine. 
The oracle writes the index $i'$ of the first important job $r_{i'}$ onto the advice tape 
(or indicates that $\sigma$ contains no important jobs) using $\ceil{\log n}+1$ bits. 
The oracle also writes the unique integer $k'$ such that $s^{k-k'}\leq w_{i'}B < s^{k-k'+1}$ onto the advice tape, using $O(\log\log_{s}(n))$ bits.

We now explain how the algorithm knows if a job is unimportant or not. If $r_i\in\sigma$ is a job that arrives before the first important job, i.e., if $i< i'$, then $r_i$ is unimportant by definition. When job $r_{i'}$ arrives, the algorithm can deduce $k$ since it knows $k'$ from the advice and since it can compute $w_{i'}B$ without help. Knowing $k$ (and the number of jobs $n$), the algorithm is able to tell if a job is unimportant or not.

If a job $r_i\in\sigma$ is unimportant, then the algorithm schedules the job on machine $1$.

\emph{Scheduling important jobs.}
We  now describe how the algorithm schedules the important jobs. 
To this end, we define the type of an important job. 
The \emph{type} of an important job $r_i$ is the non-negative integer $t_i$ such that $s^{k-t_i}\leq w_iB< s^{k-t_i+1}$. Note that there are only $\ceil{\log_{s}(n^2)}+1$ different types. For each possible type $0\leq t\leq \ceil{\log_{s}(n^2)}$, the oracle writes the number of jobs $a_t$ of that type onto the advice tape. This requires at most $O(\log_{s}(n^2)\log(n))$ bits of advice.

Note that since $\|\cdot\|$ is a norm, if $r_i\in\sigma$ is of type $t_i$, then $s^{k-t_i}B^{-1} \leq w_i\leq s^{k-t_i+1}B^{-1}$. 
The algorithm computes an optimal schedule $\widehat{S}_{\text{\imp}}$ for the input $\widehat{\sigma}$ which for each possible type $0\leq t\leq \ceil{\log_{s}(n^2)}$ contains $a_{t}$ jobs with weight $\widehat{w}_t=s^{k-k_i+1}B^{-1}$. This choice of weight ensures that if $r_{i}\in\sigma$ is a job of type $t_{i}$, then, 
\begin{equation}
\label{eq:related}
w_{i}< \widehat{w}_{t_i}\leq s\cdot w_{i}.
\end{equation}
When an important job of $\sigma$ arrives, the algorithm computes the type of the job. Based solely on this type, the algorithm schedules the important jobs in $\sigma$ by following the schedule $\widehat{S}_{\imp}$ for $\widehat{\sigma}$. Let $\vl_{\imp}$ be the load-vector of the important jobs of $\sigma$ scheduled by $\ALG$. Note that by Eq.~(\ref{eq:related}), the weight of an important job of $\sigma$ is strictly smaller than the weight of the corresponding job of $\widehat{\sigma}$. Thus, $\|\vl_{\imp}\|$ is bounded from above by the cost of the schedule $\widehat{S}_{\imp}$ for $\widehat{\sigma}$.

\emph{Putting it all together.} The fixed optimal schedule for $\sigma$ induces a scheduling of $\widehat{\sigma}$. Let $\widehat{\vl}$ be the load-vector of this schedule. By Eq.~(\ref{eq:related}), we get that $\widehat{\vl}\leq s\vl_\OPT$. Thus, the cost of $\widehat{S}_{\text{\imp}}$ (which was an optimal scheduling of $\widehat{\sigma}$) is at most $\|\widehat{\vl}\|\leq s\|\vl_\OPT\|$.

Let $W_u$ be the total weight of unimportant jobs scheduled on machine $1$ by $\ALG$. We have that
\begin{align*}
\|(W_u/C_1)\mathbf{1}_1\|=W_uB\leq n s^{k-\ceil{\log_{s}(n^2)}}\leq n\frac{\|\vl_\OPT\|}{n^2}\leq \frac{\|\vl_\OPT\|}{n}.
\end{align*}
We are finally able to bound the cost of the entire schedule $S_{\imp}\cup S_{\unimp}$ created by $\ALG$: \begin{align*}
\ALG(\sigma)=\|\vl_{\imp}+(W_u/C_1)\mathbf{1}_1\|\leq \|\vl_{\imp}\|+\|(W_u/C_1)\mathbf{1}_1\|\leq (s+1/n)\|\vl_\OPT\|
\end{align*}
Recall that $s=1+\varepsilon / 2$. Thus, if $n>2/\varepsilon$, then $\ALG(\sigma)\leq (1+\varepsilon)\OPT(\sigma)$. For inputs of length less than $2/\varepsilon$, the oracle can simply encode the optimal solution using at most $\frac{2}{\varepsilon}\lceil \log m\rceil$ bits of advice. The total amount of advice used by our algorithm is $O(\varepsilon^{-1}\log^2 n)$.
\qed
\end{proof}

We now consider scheduling problems where the goal is to maximize an objective function $f$. Recall that we assume that the objective function is non-decreasing. The most notable example is when $f$ is the minimum load. In the following theorem, we show how to schedule almost optimally on unrelated machines with only a rather weak constraint on $f$ (weaker than $f$ being a norm).
\begin{theorem}\label{thm:seminorm}
Let $\P$ be a scheduling problem on $m$ unrelated machines where the goal is to maximize an objective function $f$.
 Assume that $f$ is non-decreasing, that $f(\alpha\vl) \leq \alpha f(\vl)$ for every $\alpha\geq 0$, and $\vl\in\mathbb{R}^m_+$.
 Then, for every $0<\varepsilon\leq 1$, there exists a $(1+\varepsilon)$-competitive $\P$-algorithm with advice complexity $O((\frac{4}{\varepsilon} \log(n)+2)^mm^2\log n).$
 In particular, if $m=O(1)$ and $\eps = \Omega(1)$, the advice complexity is $O(\polylog (n))$.
\end{theorem}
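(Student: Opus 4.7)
The plan is to adapt the exponential-sparsification scheme of Theorem~\ref{thm:unrelatedmin} to the maximization setting, reversing the direction of the rounding and exploiting the monotonicity of $f$ when placing unimportant jobs. I would fix $s=1+\varepsilon/2$ (so that $s(1+1/n)\leq 1+\varepsilon$ for $n\geq 2/\varepsilon$), have the oracle compute an arbitrary optimal schedule with load vector $\vl_{\OPT}$ and value $V=f(\vl_{\OPT})$, and select the unique integer $k$ with $s^k\leq V<s^{k+1}$. With $K=\lceil\log_s(n^2)\rceil$, each pair $(r_i,j)$ is assigned a bucket $\Delta_i(j)\in\{\bot,0,1,\ldots,K,\top\}$ exactly as in the minimization proofs: $\bot$ when $w_i(j)<s^{k-K}$, the unique $t$ with $s^{k-t}\leq w_i(j)<s^{k-t+1}$ when $w_i(j)\in[s^{k-K},s^{k+1})$, and $\top$ when $w_i(j)\geq s^{k+1}$. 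A job $r_i$ is \emph{important} if its type vector $\mathbf{\Delta}_i=(\Delta_i(1),\ldots,\Delta_i(m))$ has at least one non-$\bot$ coordinate.

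The advice will consist of a self-delimiting encoding of $n$, the short bootstrap data that lets the algorithm deduce $k$ upon arrival of the first important job (exactly as in the proofs of Theorems~\ref{thm:unrelatedmin} and~\ref{thm:relatedmin}), and, for every type $\mathbf{\Delta}$ and every machine $j$, the number $a_{\mathbf{\Delta},j}$ of jobs of type $\mathbf{\Delta}$ assigned to machine $j$ by OPT. Since there are at most $(K+2)^m=O((\tfrac{4}{\varepsilon}\log n+2)^m)$ types, each of the $m$ per-machine counts per type requires $O(\log n)$ bits (including self-delimiting overhead), and this fits within the claimed $O((\tfrac{4}{\varepsilon}\log n+2)^m m^2\log n)$ bits. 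The algorithm will place every unimportant job (including all jobs preceding the first important one) on an arbitrary machine; because $f$ is non-decreasing, extra load can never decrease $f(\vl_{\ALG})$. Upon receiving an important job of type $\mathbf{\Delta}$, the algorithm picks any machine $j$ on which it has so far scheduled strictly fewer than $a_{\mathbf{\Delta},j}$ type-$\mathbf{\Delta}$ jobs; such a $j$ must exist because $\sum_j a_{\mathbf{\Delta},j}$ equals the total number of type-$\mathbf{\Delta}$ jobs in $\sigma$.

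For the analysis, I would set $\underline{w}_{\mathbf{\Delta}}(j)=s^{k-\Delta(j)}$ on normal buckets, $s^{k+1}$ on $\top$, and $0$ on $\bot$. Because the algorithm reproduces OPT's per-type per-machine counts and every actual weight dominates its lower rounding, $L_j^{\ALG}\geq L_j^{\star}\coloneqq\sum_{\mathbf{\Delta}} a_{\mathbf{\Delta},j}\,\underline{w}_{\mathbf{\Delta}}(j)$ for every machine $j$. Mirroring the accounting of Theorem~\ref{thm:unrelatedmin} gives (i) on normal buckets, each OPT weight is within a factor $s$ of its lower rounding; (ii) the total $\bot$ contribution on any machine is at most $V/n$; and (iii) on any coordinate where either OPT or the algorithm receives a $\top$-bucket job, the load already exceeds $s^{k+1}>V$. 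Combining these with the inequality $f(\vl_1)\geq \alpha^{-1}f(\vl_2)$ valid whenever $\vl_1\geq \alpha^{-1}\vl_2$ (a consequence of $f(\alpha\vl)\leq \alpha f(\vl)$), one obtains $f(\vl_{\ALG})\geq f(\vl^{\star})\geq V/(s(1+1/n))\geq V/(1+\varepsilon)$ for $n\geq 2/\varepsilon$. Inputs of length less than $2/\varepsilon$ are handled by writing the optimal assignment directly onto the tape in $(2/\varepsilon)\lceil\log m\rceil$ bits, which is absorbed by the advice bound.

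The hard part will be step (iii): the algorithm's specific $\top$-bucket jobs on a given machine need not share weights with OPT's, so $L_j^{\ALG}$ can fall strictly below $L_j^{\OPT}$ on coordinates with $\top$ contributions. The resolution is that any such coordinate contributes at least $s^{k+1}>V$ in both schedules, so capping those coordinates of $\vl_{\OPT}$ at $s^{k+1}$ loses at most a factor $s$ in $f$ by the non-decreasingness and subhomogeneity of $f$; this is the only point in the proof where both assumptions on $f$ are used jointly, and it is what ties together the normal-bucket factor-$s$ rounding, the $\bot$-bucket $V/n$ loss, and the $\top$-bucket capping into the final $(1+\varepsilon)$ competitive ratio.
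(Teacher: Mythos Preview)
Your scheme has a genuine gap: a single global threshold $k$ determined by $V=f(\vl_{\OPT})$ does not control the individual machine loads $L_j^{\OPT}$, which can be arbitrarily larger or smaller than $V$. Take $m=2$, $f(L_1,L_2)=\sqrt{L_1L_2}$ (which is non-decreasing and satisfies $f(\alpha\vl)=\alpha f(\vl)$), and three jobs with weight vectors $(1,0)$, $(0,1)$, $(0,M)$ for large $M$. Then $\vl_{\OPT}=(1,M+1)$ and $V\approx\sqrt{M}$, so relative to your global $k$ the first two jobs have type $(\bot,\bot)$ and are unimportant. They share the same type, yet swapping their machine assignments drives $L_1^{\ALG}$ to $0$ and hence $f(\vl_{\ALG})=0$. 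The same phenomenon hits the $\top$ side: your capping claim---that replacing coordinates of $\vl_{\OPT}$ exceeding $s^{k+1}$ by $s^{k+1}$ costs only a factor $s$ in $f$---does not follow from monotonicity plus subhomogeneity (with $\vl_{\OPT}=(1,M)$ the cap gives $f\approx\sqrt{sV}\ll V/s$). What you actually need is a uniform coordinate-wise bound $L_j^{\ALG}\geq L_j^{\OPT}/\alpha$ for \emph{every} $j$, and a global $k$ cannot deliver this.

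The paper obtains exactly that uniform bound by using a \emph{separate} threshold $k_j$ for each machine, chosen so that $s^{k_j}\leq L_j^{\OPT}<s^{k_j+1}$. A job huge to machine $j$ then has weight exceeding $L_j^{\OPT}$ and so is never placed there by \OPT, eliminating the $\top$-bucket difficulty; and a job unimportant to machine $j$ contributes at most $L_j^{\OPT}/n$ there, which is now the right scale. Matching per-type per-machine counts then gives $L_j^{\ALG}\geq (1-1/n)L_j^{\OPT}/s$ for all $j$, from which the $(1+\varepsilon)$-competitiveness follows directly via monotonicity and $f(\alpha\vl)\leq\alpha f(\vl)$. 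The cost of learning $m$ thresholds instead of one is a more delicate bootstrap: the paper renumbers the machines in the order their first important jobs appear and runs the algorithm in $m+1$ phases, with the type vectors growing from length $1$ to length $m$ across the phases; this phase structure is also where the extra factor of $m$ in the stated advice bound comes from.
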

\begin{proof}
Fix an input sequence $\sigma$ and an arbitrary optimal schedule. Let $s=1+\varepsilon/2$. The oracle uses $O(\log n)$ bits to encode $n$ using a self-delimiting encoding.

For $1\leq j\leq m$, let $L_j$ be the load on machine $j$ in the
optimal schedule. Furthermore, let $k_j$ be the unique integer such
that $s^{k_j}\leq L_j<s^{k_j+1}$. 
We say that a job $r_i$ is \emph{unimportant to machine $j$} if
$w_i(j)<s^{k_j-\lceil \log_{s}(n^2)\rceil}$, \emph{important to
  machine $j$} if $s^{k_j-\lceil \log_{s}(n^2)\rceil}\leq
w_i(j)<s^{k_j+1}$ and {\em huge to machine $j$} if $w_i(j) \geq
s^{k_j+1}$.
Note that if $r_i$ is huge to machine $j$, \opt does not schedule
$r_i$ on machine $j$. A job which is important to
at least one machine is called {\em important}.
All other jobs are called {\em unimportant}.
Note that, by definition, any unimportant job is unimportant (and not huge) to the machine where it
is scheduled by \opt.
We number the machines such that the first job which is important to machine $j$ arrives no later than the first job which is important to machine $j'$ for every $j<j'$.
This numbering is written to the advice tape, using $O(m \log m)$ advice bits.

The algorithm works in $m+1$ phases (some of which might be empty). Phase 0 begins when the first request arrives. For $1\leq j\leq m$, phase $j-1$ ends and phase $j$ begins when the first important job for machine $j$ arrives. Note that the same job could be the first important job for more than one machine. Phase $m$ ends with the last request of $\sigma$.  
For each phase, $j$, the oracle writes the index, $i$, of the request starting the phase and the unique integer $\Delta_i(j)$ such that $s^{k_j-\Delta_i(j)} \leq w_i(j) < s^{k_j-\Delta_i(j)+1}$.

The unimportant jobs are scheduled arbitrarily by our algorithm (it will become clear from the analysis of the algorithm that any choice will do).

We now describe how the algorithm schedules the important jobs in phase $j$ for $1\leq j\leq m$. By definition, at any point in phase $j$, we have received an important job for machines $1,2,\ldots , j$ and no important job for machine $j+1$ has yet arrived. 

The \emph{type} of a job $r_i$ in phase $j$ is a vector $\mathbf{\Delta}_i=(\Delta_i(1),\ldots , \Delta_i(j))$ where $\Delta_i(j')$ is the interval of $r_i$ on machine $j'$
(so $s^{k_j-\Delta_i(j')} \leq w_i(j') < s^{k_j-\Delta_i(j')+1}$) or $\bot$ if the job is not important to machine $j'$. Note that there are $\ceil{(2+\log_{s}(n^2))^j}$ possible job types in phase $j$. The oracle considers how the jobs in phase $j$ are scheduled in the fixed optimal schedule. For each job type $\mathbf{\Delta}$ and each machine $1\leq j'\leq j$, the oracle encodes the number of jobs of that type which are scheduled on machine $j'$ during phase $j$. This can be done using $O(\lceil 2+\log_{s}(n^2)\rceil^mm\log n)$ bits of advice for a single phase, and  $O(\lceil 2+\log_{s}(n^2)\rceil^mm^2\log n)$ bits of advice for all $m$ phases.

Equipped with the advice described above, the algorithm simply schedules the important jobs in the current phase based on their types. This ensures that, for each machine $j$, the total load of important jobs that \opt schedules on machine $j$ is at most $s$ times as large as the total load of important jobs scheduled by \alg on machine $j$ (since if $r_i$ and $r_{i'}$ are important to machine $j$ and of the same type, then $w_i(j)< s\cdot w_{i'}(j)$).

In order to finish the proof, we need to show that the contribution of
unimportant jobs to $\OPT(\sigma)$ is negligible (recall that all jobs
are either important or unimportant). To this end, let $L_j^{\unimp}$ (resp. $L_j^{\imp}$) be the load on machine $j$ of the unimportant (resp. important) jobs scheduled on that machine in the optimal schedule. Note that $L_j=L_j^{\unimp}+L_j^{\imp}$. By the definition of an unimportant job (and since there trivially can be no more than $n$ unimportant jobs), we find that for every $1\leq j\leq m$,

\begin{equation*}
L_j^{\unimp}<n \cdot s^{-\ceil{\log_s(n^2)}} \cdot L_j\leq \frac{L_j}{n}.
\end{equation*}
Thus, $L_j=L_j^{\unimp}+L_j^{\imp}\leq L_j/n+L_j^{\imp}$ from which $L_j\leq \frac{n}{n-1}\cdot L_j^{\imp}$ follows, assuming that $n>1$. Since this holds for all machines, and since as previously argued $\vl_{\OPT}^{\imp}\leq s\cdot \vl_{\ALG}^{\imp} 
\leq s\cdot \vl_{\ALG}$, we get that

$$\vl_{\OPT}\leq \frac{n}{n-1} \vl_{\OPT}^{\imp}\leq s\frac{n}{n-1} \vl_{\ALG}.$$
By assumption, the objective function $f$ satisfies $f(\alpha \vl)\leq \alpha f(\vl)$ and is non-decreasing. Thus, we conclude that
\begin{align*}
\OPT(\sigma)&=f(\vl_{\OPT}) \leq f\left( s\frac{n}{n-1}\vl_{\ALG}\right)\leq s\frac{n}{n-1}f(\vl_{\ALG})=s\frac{n}{n-1}\ALG(\sigma).
\end{align*}
For $n \geq 2 + \frac{2}{\eps}$, this gives a ratio of at most $1+\eps$.
\qed
\end{proof}

\bibliographystyle{plain}
\bibliography{refs}
\clearpage

\appendix
\normalsize

\setcounter{theorem}{1}

\section*{Appendix}
\renewcommand{\thesubsection}{\Alph{subsection}}

\subsection{AOC-Complete Problems}
For completeness, we state the full definition of \minSk from~\cite{BFKM15}:
 \begin{definition}[\!\cite{BFKM15}]
\label{minSkdef}
The \emph{minimum asymmetric string guessing problem with known history}, \minSk
, has input $\langle ?,x_1,\ldots , x_n\rangle$, where $x = x_1 \ldots x_n \in\{
0,1\}^n$, for some $n\in\mathbb{N}$. For $1\leq i \leq n$, round $i$ proceeds as
 follows:
\begin{enumerate}
\item If $i>1$, the algorithm learns the correct answer, $x_{i-1}$, to the request in the previous round.  
\item The algorithm answers $y_i=f(x_1,\ldots , x_{i-1})\in\{0,1\}$, where $f$ is a function defined by the algorithm.

\end{enumerate}
The {\em output} $y=y_1\ldots y_n$ computed by the algorithm is \emph{feasible},
 if 
 $x\sq y$. Otherwise, $y$ is \emph{infeasible}. 
The \emph{cost} of a feasible output is $\no{y}$, and the cost of an infeasible 
output is $\infty$.
\end{definition}

In addition to \minasgk, the class of \aocc problems also contains many graph problems.
The following four graph problems are studied in the vertex-arrival
model, so the requests are vertices, each presented together with its
edges to previous vertices.
The first three problems are minimization problems and the last one is
a maximization problem.
In {\em  Vertex Cover}, an algorithm must accept a set of vertices
which constitute a vertex cover, so for every edge in the requested
graph, at least one of its endpoints is accepted. 
For {\em Dominating Set}, the
accepted vertices must constitute a dominating set, so every vertex
in the requested graph must be accepted, or one its neighbors
must be accepted.
In {\em  Cycle Finding}, an algorithm must accept a set of vertices
inducing a cyclic graph.
For {\em Independent Set}, the accepted vertices must form an
independent set, i.e., no two accepted vertices share an edge. 

For {\em Disjoint Path Allocation} a path $P$ is given, and the requests
are subpaths of $P$.
The aim is to accept as many edge disjoint paths as possible.

For {\em Set Cover}, the requests are finite subsets from a known
universe, and the union of the accepted subsets must be the entire universe.
The aim is to accept as few subsets as possible.

\subsection{Reductions for Theorem~\ref{thm:reduction}}
\label{reductions}
In the proof of Theorem~\ref{thm:reduction}, a reduction sketch was given
for the weighted online version of Vertex Cover. Here we include
sketches for the
reductions for the weighted versions of Cycle Finding, Dominating Set
and Set Cover.

\subsubsection{Cycle Finding}
Each input $\seq = \langle x_1, x_2, \ldots, x_n \rangle$ to the problem
\minasgk, is transformed to $f(\seq) = \langle v_1, v_2, \ldots, v_n \rangle$, where
$V=\{v_1, v_2, \ldots, v_n \}$ is the vertex set of a graph with edge
set $E=\{(v_j,v_i) \colon f'(x_i)=j\}\cup \{ (v_\textsc{min},
v_\textsc{max})\}$, where
$f'(x_i)$ is the largest $j<i$ such that $x_j=1$,
 $\textsc{max}$ is the largest $i$ such that $x_i=1$, and $\textsc{min}$ is
 the smallest $i$ such that $x_i=1$. If $\no{\seq}>2$, the vertices 
corresponding to $1$s form the only cycle in the graph.

The advice used by the \minasgk algorithm \ALGa consists of the advice used by 
the Cycle Finding
algorithm \ALGb in combination with 1 bit indicating whether or not $\no{\seq}\leq
2$ and in this case (an encoding of) one or two indices of $1$s in
the input sequence. If $\no{\seq}>2$, then
\ALGb accepts some vertices, and \ALGa returns a $1$ for the $x_i$
corresponding to each of those vertices. 

If \ALGa returns a non-optimal feasible
set, \ALGb does too, and the sets have the same weights, so
$\ALGa(\seq) \leq \ALGb(f(\seq))+\OPT(\seq)$. In this case, the
weights of the optimal solutions for $\seq$ and $f(\seq)$ are both the sum
of the weights of the elements corresponding to $1$s in $\seq$,
so $f$ is a length preserving $O(f(n))$-reduction.

\subsubsection{Dominating Set}
Each input $\seq = \langle x_1, x_2, \ldots, x_n \rangle$ to the problem
\minasgk, is transformed to $f(\seq) = \langle v_1, v_2, \ldots, v_n \rangle$, where
$V=\{v_1, v_2, \ldots, v_n \}$ is the vertex set of a graph with edge
set $E=\{(v_i,v_\textsc{max})\}.$, where
 $\textsc{max}$ is the largest $i$ such that $x_i=1$.

The advice used by the \minasgk algorithm \ALGa consists of the advice used by 
the Dominating Set
algorithm \ALGb in combination with 1 bit indicating whether or not $\no{\seq}=0$.
If $\no{\seq}\geq 1$, then there is another bit of advice indicating whether
or not \ALGb accepted $v_{\textsc{max}}$. If \ALGb did not accept
$v_{\textsc{max}}$, the advice also contains an index of a vertex corresponding
to a $0$ in $\seq$ which was accepted, plus the index of $v_{\textsc{max}}$.

If \ALGa's solution is
feasible, but not optimal, then $\no{\seq}>0$ and
\ALGb accepts some vertices, and \ALGa returns a $1$ for the $x_i$
corresponding to each of those vertices (though, in the case where
$v_{\textsc{max}}$ was rejected, it answers $1$ for $x_{\textsc{max}}$ and 
answers $0$ for the earlier request indicated by the advice).

If $\no{\seq}>0$ a minimum weight dominating
set for $f(\seq)$ consists of exactly those vertices corresponding to
$1$s in $\seq$, so the weights of the optimal solutions for $\seq$ and $f(\seq)$ 
are both the sum
of the weights of the elements corresponding to $1$s in $\seq$, unless
\ALGb did not accept $v_{\textsc{max}}$. However, the weight of $x_{\textsc{max}}
\leq \OPT(\seq)$, so
$\ALGa(\seq) \leq \ALGb(f(\seq))+\OPT(\seq)$.
Thus, $f$ is a length preserving $O(\log(n)$-reduction.

\subsubsection{Set Cover}
This reduction is very similar to that for Dominating Set.
Each input $\seq = \langle x_1, x_2, \ldots, x_n \rangle$ to the problem
\minasgk,  $\textsc{max}$ is the largest $i$ such that $x_i=1$.
In the set cover instance, the universe is
$\{ 1,\ldots,n\}$, and $f(\seq)$ is a set of $n$ requests, where
request $i$ is $\{ i\}$, unless $i=\textsc{max}$, in which case, the
set consists of $\textsc{max}$ and all of the $j$ where $x_j=0$.

As with the reduction to Dominating Set, this is a 
length preserving $O(\log(n)$-reduction.

\end{document}